\newcommand{\rr}{{\mathbb{R}}}
\newcommand{\T}{{\text{T}}}
\newcommand{\col}{{\text{col}}}
\newcommand{\fliu}[1]{\ifthenelse{\boolean{showcomments}}
	{ \textcolor{red}{(FL:  #1)}}{}}
\newcommand{\peng}[1]{\ifthenelse{\boolean{showcomments}}
	{ \textcolor{red}{(PY:  #1)}}{}}
\theoremstyle{definition}
\newtheorem{theorem}{Theorem}
\newtheorem{lemma}[theorem]{Lemma}
\theoremstyle{definition}
\newtheorem{definition}{Definition}
\newtheorem{remark}{Remark}
\newtheorem{assumption}{Assumption}
\newtheorem{condition}{Condition}
\begin{document}
	\setstretch{0.981}	
	
	\title{Compositional and Equilibrium-Free Conditions for Power System Stability--Part I: Theory}
	
	\author{Peng~Yang, \IEEEmembership{Member,~IEEE}, 	
		~Xiaoyu~Peng,
		~Xi~Ru,
		~Hua~Geng, \IEEEmembership{Fellow,~IEEE},
		~Feng~Liu, \IEEEmembership{Senior~Member,~IEEE}	
	}
	
	\maketitle
	
	\begin{abstract}
		Traditional centralized stability analysis struggles with scalability in large complex modern power grids. This two-part paper proposes a compositional and equilibrium-free approach to analyzing power system stability. In Part I, we prove that using equilibrium-free local conditions we can certificate system-wide stability of power systems with heterogeneous nonlinear devices and structure-preserving lossy networks. This is built on a recently developed notion of delta dissipativity, which yields local stability conditions without knowing the system-wide equilibrium. As a consequence, our proposed theory can certificate stability of equilibria set rather than single equilibrium. In Part I, we verify our theory and demonstrate promising implications by the single machine single load benchmark, which helps to better explain the compositional and equilibrium-set-oriented stability analysis. Part II of this paper will provide methods for applying our theory to complex power grids, together with case studies across a wide range of system scales. Our results enable a more scalable and adaptable approach to stability analysis. It also sheds light on how to regulate grid-connected devices to guarantee system-wide stability.
	\end{abstract}
	
	\begin{IEEEkeywords}
		Power system stability; equilibrium free; compositional stability analysis; delta dissipativity; equilibria set stability.
	\end{IEEEkeywords}

	\IEEEpeerreviewmaketitle

	\section{Introduction}
	\label{sec:1}
	\IEEEPARstart{S}{tability} in power systems is a critical concern, particularly in the context of modern power grids, which are becoming increasingly complex and interconnected. Traditional stability analysis methods, e.g., time-domain simulation\cite{Stott_Powersystemdynamic_1979}, eigenvalue analysis\cite{Sastry_Hierarchicalstabilityalert_1980}, and direct methods\cite{Chiang_Directstabilityanalysis_1995},  are \textit{centralized},  which rely on system-wide model and centralized computation. However, as massive amounts of small renewable energies replace large synchronous generators (SGs), centralized methods face significant challenges on scalability and privacy concerns\cite{liu2022stability}. The sheer volume and heterogeneity of devices necessitate a shift towards the so-called \textit{compositional} or \textit{distributed} approaches, to handle the complexity in large-scale power grids. 
	
	The core idea of compositional methods is to infer system-wide stability by analyzing the modular properties of its subsystems and their interactions. In particular, the approach based on the concept of passivity and dissipativity, proposed by J.C. Willems in the 1970s \cite{willems1972dissipative}, has been widely used to study the interconnection of nonlinear systems\cite{Bao_ProcessControlPassive_2007,vanderSchaft_L2GainPassivityTechniques_2017,arcak2016networks}. It also has been successfully applied in power systems to design controllers\cite{8424071,Stegink_unifyingenergybasedapproach_2016,wang2018distributed_coping} and derive stability conditions\cite{Fiaz_portHamiltonianapproachpower_2013a,Caliskan_CompositionalTransientStability_2014,8890862}. Despite the fruitful results of these and other compositional approaches to power system stability \cite{he2024passivity,huang2024gain,baros2020distributed,Song_DistributedFrameworkStability_2017}, a significant challenge remains: the local subsystem analysis often requires knowledge of the system-wide equilibrium. For example, the local passivity conditions proposed in \cite{8890862} and \cite{he2024passivity} require knowing the local input-state-output equilibrium; the local stability conditions proposed in \cite{huang2024gain} assume linearization of the system at a given equilibrium. Although assuming the equilibrium facilitates stability analysis, it hinders the compositional nature of the approach, as the equilibrium itself inherently depends on all subsystems. Furthermore, this dependency makes it difficult to establish grid codes for regulating the dynamic characteristics of grid-connected devices. An effective compositional analysis should ideally be equilibrium-free at the device level, ensuring that local analysis can proceed without needing to account for the system-wide equilibrium or varying operating conditions. 
	
	A recently proposed concept, known as delta dissipativity \cite{schweidel2021compositional} or Krasovskii passivity \cite{kawano2020krasovskii,kawano2023krasovskii}, provides tools to compositionally analyze stability without locally knowing the system-wide equilibrium. The essence of delta dissipativity is defining the supply rate as a function of time derivatives of the original input and output. And it requires the storage function to vanish, not at any specific equilibrium, but whenever the vector field of dynamics vanishes. This enables equilibrium-independent modular properties to certificate system-wide stability. Several relevant concepts, namely equilibrium-independent dissipativity (EID) \cite{simpson2018equilibrium}, incremental dissipativity \cite{pavlov2008incremental}, and differential passivity \cite{6760930}, also provide possibilities for equilibrium-free and compositional stability analysis. EID defines dissipativity w.r.t. the set of all possible equilibria and hence no worries about which equilibrium the interconnection permits. Nevertheless, it can be difficult to characterize the set of all equilibria in complex nonlinear systems. Incremental dissipativity and differential passivity are more demanding properties that define dissipativity w.r.t. any two trajectories and passivity w.r.t. the tangent bundle of the trajectory manifold, respectively. Despite the elegance in theory, their applicable scope in power systems could be limited due to their demanding requirement on subsystems. We refer interested readers to \cite{kawano2020krasovskii} for a detailed comparison among differential passivity, incremental passivity, and Krasovskii passivity.
	
	In Part I of this paper, we extend the definition of delta dissipativity from \cite{schweidel2021compositional} and integrate it with the recently proposed \emph{augmented synchronization} theory for power systems \cite{10239448} to establish equilibrium-set-oriented stability results. Leveraging this theoretical tool, we then propose compositional and equilibrium-free stability conditions for structure-preserving power systems with heterogeneous nonlinear devices. 	
	The main contributions of Part I are summarized below.
	\begin{itemize}
		\item We extend the existing definition of delta dissipativity and prove the stability of interconnected systems w.r.t. equilibrium set. Compared with \cite{schweidel2021compositional,kawano2020krasovskii}, we allow delta dissipativity to hold only in a region of the state-input space, which makes it less demanding and hence applicable to heterogeneous power system devices. Additionally, we define the storage function in terms of class $\mathcal{K}$ functions and model the interconnected system by differential-algebraic equations (DAEs), which fits in the framework of structure-preserving power systems and augmented synchronization stability \cite{10239448}. This enables equilibrium-set-oriented stability analysis of interconnected large systems, which is missing in existing works.
		\item We apply the extended delta dissipativity theory to power systems, deriving equilibrium-free stability conditions that are suitable for heterogeneous nonlinear grid-connected devices. In contrast to \cite{Fiaz_portHamiltonianapproachpower_2013a,Caliskan_CompositionalTransientStability_2014,8890862,Song_DistributedFrameworkStability_2017,he2024passivity,huang2024gain,baros2020distributed}, our approach enables compositional stability analysis of power systems towards equilibria set. Moreover, our approach is built on the structure-preserving model of power systems. Compared with the network-reduced methods \cite{he2024passivity,huang2024gain,8890862}, our approach is compatible with a wider diversity of devices and avoids computing network reduction.
	\end{itemize}
	
	The results of the proposed theory are illustrated in the single machine single load benchmark to better explain the concept of compositional analysis and equilibrium independence. In Part II, we will provide methods for applying our theory to large-scale power grids, together with case studies across a wide range of system scales.
	
	The rest of the paper is organized as follows. Section II presents the power system model and formulates our problem. Section III introduces the concept of delta dissipativity and gives theorems on system-wide stability. Section IV elaborates on the compositional and equilibrium-free stability conditions for power systems. Section V verifies our results by simulations on the single machine single load benchmark. Section VI concludes the paper. 
	
	\emph{Notations}: $\rr^n$ is the set of $n$-dimensional real numbers. Let $\text{col}(x_1,x_2)=(x_1^\T,x_2^\T)^\T$ be a column vector in $\rr^{n+m}$ with $x_1\in\rr^n$ and $x_2\in\rr^m$. For a matrix $A\in\rr^{n\times n}$, $\mathrm{det}(A)$ denotes the determinant of $A$. For a symmetric  matrix $A\in\rr^{n\times n}$, $A\succ(\succeq)0$ means $A$ is positive (semi-) definite, while $A\prec(\preceq)0$ means $A$ is negative (semi-) definite. Given another symmetric  matrix $B\in\rr^{n\times n}$, $A\succ(\succeq, \prec, \preceq)B$ means $A-B\succ(\succeq, \prec, \preceq)0$. 
	
	\section{Problem Formulation}\label{sec:2}
	
	\subsection{Modeling the Power System}\label{sec:power system model}
	Consider a structure-preserving power system that can be abstracted as an undirected graph $\cal G=(\cal V,\cal E)$, where $\cal V$ is the set of buses with $|\mathcal{V}|=N$ and $\cal E$ is the set of lines. Each bus in $\cal G$ is associated with a complex bus voltage $V_{Di}+\text{j}V_{Qi}$ and a complex current injection $I_{Di}+\text{j}I_{Qi}$, where the subscript $D$, $Q$ represent the direct and the quadrature components in the common reference frame, respectively. Each bus in the structure-preserving power system model defines either a dynamic input-output relation between the complex voltage and the complex current (e.g., a synchronous generator) or a static input-output relation (e.g., a constant power load). Let $\mathcal{V}_1$ denote the set of dynamic buses and $\mathcal{V}_2$ denote the set of static ones. We have $\mathcal{V}_1\cup\mathcal{V}_2=\mathcal{V}$ and $\mathcal{V}_1\cap\mathcal{V}_2=\emptyset$.
	
	For each $i\in\mathcal{V}_1$, we consider the general nonlinear input-state-output model $H_i: u_i\mapsto y_i$ as follows. 
	
	\begin{equation}\label{eq:c5-ps-dbus}
		\left\lbrace 	
		\begin{aligned}
			\dot{x}_i&=f_i(x_i,u_i)\\
			y_i&=h_i(x_i,u_i)
		\end{aligned}\right., 
	\end{equation}
	where ${{x}_{i}}\in {{\mathbb{R}}^{{{n}_{i}}}}$ is the state variable, ${{u}_{i}}\in {{\mathbb{R}}^{2}}$ is the input, and ${{y}_{i}}\in {{\mathbb{R}}^{2}}$ is the output. Assume ${{f}_{i}}:\rr^{n_i}\times\rr^2\to\rr^{n_i}$ and ${{h}_{i}}:\rr^{n_i}\times\rr^2\to\rr^2$ are twice continuously differentiable functions. 
	
	For each $i\in\mathcal{V}_2$, we consider the general nonlinear input-output model $H_i: u_i\mapsto y_i$ as follows. 
	
	\begin{equation}\label{eq:c5-ps-sbus}
		y_i=h_i(u_i),
	\end{equation}
	where ${{u}_{i}}\in {{\mathbb{R}}^{2}}$ is the input and ${{y}_{i}}\in {{\mathbb{R}}^{2}}$ is the output. Assume ${{h}_{i}}:\rr^2\to\rr^2$ is a twice continuously differentiable function.
	
	Depending on the nature of the power device, there may exist two types of input-output pairs for both dynamic and static buses. The first one takes voltage as input and current as output, i.e., ${{u}_{i}}={{({{V}_{Di}},{{V}_{Qi}})}^\text{T}}$ and ${{y}_{i}}=-{{({{I}_{Di}},{{I}_{Qi}})}^\text{T}}$. The second one takes current as input and voltage as output, i.e., ${{u}_{i}}=-{{({{I}_{Di}},{{I}_{Qi}})}^\text{T}}$ and ${{y}_{i}}={{({{V}_{Di}},{{V}_{Qi}})}^\text{T}}$. 
	
	All buses are interconnected through transmission lines, which are represented by the admittance matrix $Y=G+\text{j}B$, where $G\in\rr^{N\times N}$ is the conductance matrix and $B\in\rr^{N\times N}$ is the susceptance matrix. Collectively, let $V_D=\col(\{V_{Di}\}_{i\in\mathcal{V}})$, $V_Q=\col(\{V_{Qi}\}_{i\in\mathcal{V}})$, $I_D=\col(\{I_{Di}\}_{i\in\mathcal{V}})$, and $I_Q =\col(\{I_{Qi}\}_{i\in\mathcal{V}})$. The voltages and injected currents at each bus must satisfy Kirchhoff's law, which gives the network equation $I_D+\text{j}I_Q=Y(V_D+\text{j}V_Q)$, representing in a matrix form as follows.
	\begin{equation}\label{eq:c5-net}
		\begin{bmatrix}
			I_D\\I_Q
		\end{bmatrix}=
		\underbrace{\begin{bmatrix}
				G&-B\\B&G
		\end{bmatrix}}_{:=M_Y}
		\begin{bmatrix}
			V_D\\V_Q
		\end{bmatrix}.
	\end{equation}
	
	\subsection{Feedback Interconnected System}
	Define the collective input of all buses as $u:=\col(\{u_i\}_{i\in\mathcal{V}})\in\rr^m$, and the collective input of all buses as $y=\col(\{y_i\}_{i\in\mathcal{V}})\in\rr^m$, where $m=2N$. Obviously, a proper permutation of $u,y$ yields  $-I_D,-I_Q,V_D,V_Q$. This defines a linear relation between $-I_D,-I_Q,V_D,V_Q$ and $u,y$, which is denoted as
	\begin{equation}\label{eq:c5-ABuy}
		\begin{bmatrix}
			-I_D\\-I_Q
		\end{bmatrix}=A_Iu+B_Iy,\quad 
		\begin{bmatrix}
			V_D\\V_Q
		\end{bmatrix}=A_Vu+B_Vy,
	\end{equation}
	where $A_I,B_I,A_V,B_V$ are $m\times m$-dimensional constant permutation matrices.
	
	Define the network coupling $H_{net}:u_{net}\mapsto y_{net}$ with 
	\begin{equation}\label{eq:net-inout}
		u_{net}=y,\quad y_{net}=-u.
	\end{equation}
	Substituting \eqref{eq:c5-ABuy} and \eqref{eq:net-inout} into \eqref{eq:c5-net} yields
	\begin{equation*}
		A_Iy_{net}-B_Iu_{net}=M_{Y}\left(-A_Vy_{net}+B_Vu_{net}\right).
	\end{equation*}
	So the static input-output equation of the network coupling is
	\begin{equation}\label{eq:c5-ps-HN}
		h_{net}(u_{net}):=\left(A_I+M_YA_V\right)^{-1}\left(B_I+M_YB_V\right)u_{net}.
	\end{equation}
	We define\footnote{One can prove that $h_{net}$ and $C$ are well-defined if and only if $u_{net}$ forms a complete set of variables for the power network circuit, i.e., when the bus voltages and currents in $u_{net}$ are specified, all remaining voltages and currents in the network can be uniquely determined.}
	\begin{equation}\label{eq:c5-partialhnet}
		C:=\frac{\partial h_{net}}{\partial u_{net}}=\left(A_I+M_YA_V\right)^{-1}\left(B_I+M_YB_V\right).
	\end{equation}
	All bus subsystems \eqref{eq:c5-ps-dbus} \eqref{eq:c5-ps-sbus} are interconnected with the network coupling \eqref{eq:c5-ps-HN} in a feedback manner with $u_{net}=y$ and $u=-y_{net}$, as shown in Fig. \ref{fig:c5-feedback}.
	\begin{figure}[htb]
		\centering
		\includegraphics[width=0.8\hsize]{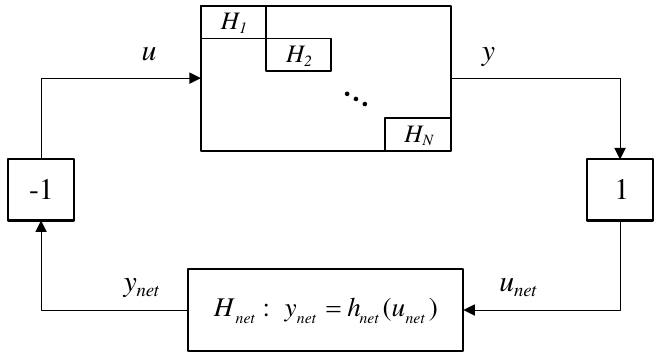}
		\caption{The feedback interconnection between the bus subsystems and the network.}
		\label{fig:c5-feedback}
	\end{figure}
	
	Define the collective state variable $x:=\col(\{x_i\}_{i\in\mathcal{V}_1})\in\rr^n$, where $n=\sum_{i\in\mathcal{V}_1}n_i$. Define functions $f(x,u):=\col(\{f_i\}_{i\in\mathcal{V}_1})$, $h(x,u):=\col(\{h_i\}_{i\in\mathcal{V}})$, and $g(x,u):=u+h_{net}(h(x,u))$. Then the whole power system model can be written compactly as the following differential-algebraic equations (DAEs).
	\begin{equation}\label{eq:c5-system}
		\left\lbrace
		\begin{aligned}
			\dot{x}=&f(x,u)\\
			0=&g(x,u)
		\end{aligned}\right..
	\end{equation}
	This gives the standard structure-preserving DAEs model of power systems \cite{bergen1981structure}, where $x\in\rr^n$ is the state variable, $u\in\rr^m$ is the algebraic variable, and $f$ and $g$ are twice continuously differentiable functions. We assume that the interconnected power system \eqref{eq:c5-system} has at least one equilibrium.
	
	\begin{remark}
		The mostly used model to characterize the power grid coupling is the power flow equations that map the bus voltage magnitude and angle $(V,\theta)$ to the power injection $(P,Q)$, see \cite{8890862} for example. Despite its wide use, it gives a nonlinear coupling relation that complicates the compositional stability analysis. To get a linear network equation, we follow \cite{spanias2018system} and model the power grid by voltage-current equations in the common DQ framework. This enables equilibrium-free and compositional conditions to certificate power system stability, which will be elaborated in Section \ref{sec:stability conditions for power system} and \ref{sec:transient stability}.
	\end{remark}
	
	\subsection{Problem Statement}
	In this paper, we aim to develop a \emph{compositional} and \emph{equilibrium-free} framework for certificating the asymptotical stability of \eqref{eq:c5-system}. By \emph{compositional}, we aim to break down the system-wide stability of the entire power system \eqref{eq:c5-system} into: i) local conditions for buses \eqref{eq:c5-ps-dbus} \eqref{eq:c5-ps-sbus}; and ii) a coupling condition for the network \eqref{eq:c5-ps-HN}. By \emph{equilibrium-free}, we require that the local conditions for \eqref{eq:c5-ps-dbus} and \eqref{eq:c5-ps-sbus} should not assume knowing the equilibrium.
	
	Consequently, we do not expect that such equilibrium-free conditions can lead to equilibrium-point-oriented stability certificates, which focus on one single specified equilibrium point. Instead, we aim to develop an equilibrium-set-oriented stability theory by answering what equilibria set is asymptotically stable given the proposed stability conditions.  
	
	\section{Delta Dissipativity and Stability}\label{sec:5.2}
	This section defines the concept of delta dissipativity for general nonlinear input-state-output dynamic systems and input-output static systems. Building on this foundation, we derive compositional stability conditions for interconnected systems described by DAEs. These results serve as fundamental mathematical tools for applications in power systems. To present the theoretical results more effectively, we use generalized system models in this section. Consequently, the conclusions are not only applicable to the power system models in Section \ref{sec:2} but also extend to more general nonlinear system models.
	\subsection{Definitions of Delta Dissipativity}
	Consider a nonlinear input-state-output dynamic system as follows.
	\begin{equation}\label{eq:dbus-noi}
		\left\lbrace 	
		\begin{aligned}
			\dot{x}&=f(x,u)\\
			y&=h(x,u),
		\end{aligned}\right. 
	\end{equation}
	where $x\in\rr^n$, $u\in\rr^m$, $y\in\rr^m$, $f:\rr^n\times\rr^m\to\rr^n$, and $h:\rr^n\times\rr^m\to\rr^m$ are continuously differentiable functions.
	\begin{definition}\label{de:IODP}
		The system \eqref{eq:dbus-noi} is delta dissipative with supply rate $w:\rr^m\times\rr^m\to\rr$ on $\mathcal{D}\subset\rr^n\times\rr^m$, if there exist a continuously differentiable storage function $S:\rr^n\times\rr^m\to\rr$ such that for some class $\mathcal{K}$ functions $\alpha$, $\beta$, and $\gamma$, it holds that:
		\begin{enumerate}
			\item for all $(x,u)\in\mathcal{D}$,
			\begin{equation*}
				\alpha\left(\|f(x,u)\|\right)\leq S(x,u)\leq \beta\left(\|f(x,u)\|\right)
			\end{equation*}
			\item for all $(x,u)\in\mathcal{D}$ and all $\dot{u}\in\rr^m$,
			\begin{equation}\label{eq:c5-dotS}
				\begin{aligned}
					\frac{\partial S(x,u)}{\partial x}f(x,u)+\frac{\partial S(x,u)}{\partial u}\dot{u}
					\leq	w(\dot{u},\dot{y})
					-\gamma\left(\|f(x,u)\|\right)
				\end{aligned}
			\end{equation}
			where $\dot{y}:=\frac{\partial h(x,u)}{\partial x}f(x,u)+\frac{\partial h(x,u)}{\partial u}\dot{u}$
		\end{enumerate}
		In particular, the system is said to be globally delta dissipative if $\mathcal{D}=\rr^n\times\rr^m$.
	\end{definition}
	
	The delta dissipativity can be viewed as a special type of dissipativity with differential input and differential output, as shown in Figure \ref{fig:c5-IODP}. Noting that $\dot{x}=f(x,u)$ and $\dot{S}(x,u)=\frac{\partial S(x,u)}{\partial x}\dot{x}+\frac{\partial S(x,u)}{\partial u}\dot{u}$, the dissipation inequality \eqref{eq:c5-dotS} is equivalent to
	\begin{equation*}
		\dot{S}(x,u)\leq
		w(\dot{u},\dot{y})-\gamma\left(\|\dot{x}\|\right)
	\end{equation*}
	That is, the dissipation is only related to the differentials, including the supply rate consisting of $\dot{u}$ and $\dot{y}$, and the class $\mathcal{K}$ function of $\dot{x}$. Such a definition of dissipativity is independent of the steady-state values of $u,y,x$, which will help to establish equilibrium-independent stability conditions.
	\begin{figure}[htb]
		\centering
		\includegraphics[width=0.9\hsize]{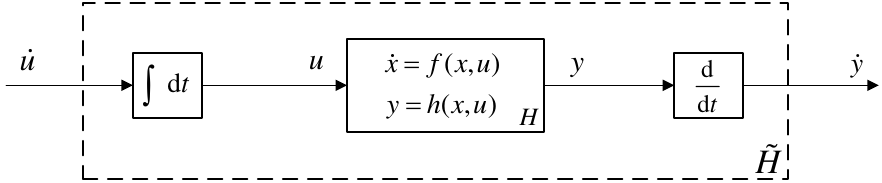}
		\caption{The input-output differential dissipativity.}
		\label{fig:c5-IODP}
	\end{figure}
	
	In this paper, following the line of \cite{simpson2018equilibrium,schweidel2021compositional}, we restrict our attention to quadratic supply rates of the form
	\begin{equation}\label{eq:X}
		w(\dot{u},\dot{y})=
		\begin{bmatrix}
			\dot{u}\\\dot{y}
		\end{bmatrix}^\text{T}
		X
		\begin{bmatrix}
			\dot{u}\\\dot{y}
		\end{bmatrix},
	\end{equation}
	where $X\in\rr^{2m}\times\rr^{2m}$ is a symmetric matrix. If the system is delta dissipative with the quadratic supply rate \eqref{eq:X} on $\mathcal{D}$, we denote it as $delta$-D($X,\mathcal{D}$) for short. The quadratic form contains some common types of dissipation as special cases, e.g., passivity, passivity with indices, and $L_2$-gain. See \cite{simpson2018equilibrium} for further discussion.
	
	Now consider a static input-output system
	\begin{equation}\label{eq:sbus-noi}
		y=h(u),
	\end{equation}
	where $u\in\rr^m$, $y\in\rr^m$, and $h:\rr^m\to\rr^m$ is a continuously differentiable function. We define delta dissipativity of \eqref{eq:sbus-noi} by taking the storage function as zero in Definition \ref{de:IODP}.
	\begin{definition}\label{de:IODPs}
		The system \eqref{eq:sbus-noi} is delta dissipative with supply rate $w:\rr^m\times\rr^m\to\rr$ on $\mathcal{D}\subset\rr^m$, if
		\begin{equation*}
			w(\dot{u},\dot{y})\geq0,\;\;\forall u\in\mathcal{D},\;\forall \dot{u}\in\rr^m
		\end{equation*}
		where $\dot{y}:=\frac{\partial h(u)}{\partial u}\dot{u}$. In particular, the system is said to be globally delta dissipative if $\mathcal{D}=\rr^m$.
	\end{definition}
	
	For a quadratic supply rate of the form \eqref{eq:X}, the condition becomes
	\begin{equation}\label{eq:c5-Ps}
		\begin{bmatrix}
			I\\\frac{\partial h(u)}{\partial u}
		\end{bmatrix}^\text{T}
		X
		\begin{bmatrix}
			I\\\frac{\partial h(u)}{\partial u}
		\end{bmatrix}\succeq0,\;\;\forall u\in\mathcal{D},
	\end{equation}
	since
	\begin{equation*}
		\begin{bmatrix}
			\dot{u}\\\dot{y}
		\end{bmatrix}^\text{T}
		X
		\begin{bmatrix}
			\dot{u}\\\dot{y}
		\end{bmatrix}=
		\dot{u}^\text{T}\begin{bmatrix}
			I\\\frac{\partial h(u)}{\partial u}
		\end{bmatrix}^\text{T}
		X
		\begin{bmatrix}
			I\\\frac{\partial h(u)}{\partial u}
		\end{bmatrix}\dot{u}.
	\end{equation*}
	Again, we denote a static system being delta dissipative with the quadratic supply rate \eqref{eq:X} on $\mathcal{D}$ as $delta$-D($X,\mathcal{D}$).
	
	Note that the choice of $X$ for the same system is often not unique. Apparently, for any $X_1\prec X_2$, $delta$-D($X_1,\mathcal{D}$) directly implies $delta$-D($X_2,\mathcal{D}$). Hence, one should choose the most ``negative" $X$ to accurately quantify the dissipativity of a system.
	\begin{remark}
		Dissipativity consisting of both input and output differentiation, to the best of our knowledge, first appeared in the study of evolutionary games to study the convergence of a system's evolution\cite{park2018passivity,fox2013population,arcak2020dissipativity}. It was later formally introduced as delta dissipativity in \cite{schweidel2021compositional} for general nonlinear systems. Our definitions are a variant of \cite{schweidel2021compositional} and distinct in two aspects. First, three class $\mathcal{K}$ functions $\alpha$, $\beta$, and $\gamma$ are introduced to bound the upper and lower bounds of the storage function $S(x,u)$ and the upper bounds of its derivatives. This allows for analysis of the region of attraction by \cite[Theorem 6]{10239448}. Secondly, conditions 1) and 2) in Definition \ref{de:IODP} only require that they hold in a certain region $\mathcal{D}\subset\rr^n\times\rr^m$. Definition in \cite{schweidel2021compositional} requires similar conditions hold for any $(x,u)\in\rr^n\times\rr^m $, which can be seen as a special case of $\mathcal{D}=\rr^n\times\rr^m$. The introduction of $\mathcal{D}$ transforms global requirements into local ones, which makes delta dissipativity less demanding and facilitates application in power systems. Indeed, case studies in Sections \ref{sec:case} will only use the local dissipativity of the SG.
	\end{remark}

	\subsection{Feedback Interconnection of Delta Dissipative Subsystems}
	Consider the feedback interconnection of $N=N_d+N_s$ subsystems satisfying $delta$-D($X_i,\mathcal{D}_i$). We assume the first $N_d$ subsystems are dynamical, denoted by $H_i$, $i=1,\dots,N_d$, which reads
	\begin{equation}\label{eq:c5-dbus}
		\left\lbrace 
		\begin{aligned}
			\dot{x}_i&=f_i(x_i,u_i)\\
			y_i&=h_i(x_i,u_i)
		\end{aligned}\right.,\qquad i=1,\dots,N_d
	\end{equation}
	where $x_i\in \rr^{n_i}$, $u_i\in \rr^{m_i}$, and $y_i\in \rr^{m_i}$. Assume that $f_i$ and $h_i$ are twice continuously differentiable functions.
	The latter $N_s$ subsystems are static, denoted by $H_i$, $i=N_d+1,\dots,N$,  which reads
	\begin{equation}\label{eq:c5-sbus}
		y_i=h_i(u_i),\qquad i=N_d+1,\dots,N
	\end{equation}
	where $u_i\in \rr^{m_i}$ and $y_i\in \rr^{m_i}$. Assume also that $h_i$ is a twice continuously differentiable function. Let $n=n_1+\dots+n_{N_d}$, $m=m_1+\dots+m_{N}$ and define the collective state, input, and output variables as $x:=\col(x_1,\dots,x_{N_d})\in\rr^n$, $u:=\col(u_1,\dots,u_{N})\in\rr^m$, and $y:=\col(y_1,\dots,y_{N})\in\rr^m$. Define the function
	\begin{equation*}
		f(x,u):=\col\left(f_1(x_1,u_1),\dots,f_{N_d}(x_{N_d},u_{N_d})\right)
	\end{equation*}
	and
	\begin{equation*}
		\begin{aligned}
			h(x,u):=\col(&h_1(x_1,u_1),\dots,h_{N_d}(x_{N_d},u_{N_d}),\\
			&h_{N_d+1}(u_{N_d+1}),\dots,h_N(u_{N}))
		\end{aligned}
	\end{equation*}
	
	As a generalization of the power network \eqref{eq:c5-ps-HN}, we consider a nonlinear network coupling that characterizes the interaction of the above $N$ subsystems, which reads $H_{net}:u_{net}\mapsto y_{net}$
	\begin{equation}\label{eq:c5-HN}
		y_{net}=h_{net}(u_{net})
	\end{equation}
	where $u_{net}\in\rr^m$, $y_{net}\in\rr^m$, $h_{net}$ is twice continuously differentiable. The subsystems \eqref{eq:c5-dbus} \eqref{eq:c5-sbus} are interconnected with \eqref{eq:c5-HN} in a feedback manner with interconnections $u_{net}=y$ and $u=-y_{net}$, as shown in Fig. \ref{fig:c5-feedback}

	Define function
	\begin{equation}\label{eq:g}
		g(x,u):=h_{net}(h(x,u))+u
	\end{equation}
	Then the above feedback interconnected system can be written compactly as the following DAEs
	\begin{equation}\label{eq:c5-dae}
		\left\lbrace 
		\begin{aligned}
			\dot{x}&=f(x,u)\\
			0&=g(x,u)
		\end{aligned}\right.
	\end{equation}
	where $x$ is the state variable and $u$ is the algebraic variable of the DAEs. Denote the algebraic manifold by
	\begin{equation*}
		\mathscr{G}:=\left\{(x,u)\in\rr^n\times\rr^m|0=g(x,u)\right\}
	\end{equation*}
	Let $\mathcal{D}:=\mathcal{D}_1\times\dots\times\mathcal{D}_N$ denote the Cartesian product of each region in which the delta dissipativity holds for each subsystem. Clearly, we have $\mathcal{D}\subset\rr^n\times\rr^m$. Let $\mathcal{D}_G:=\mathcal{D}\cap\mathscr{G}$ denote the intersection of $\mathcal{D}$ and the algebraic manifold. Let
	\begin{equation}\label{eq:c5-Dy}
		\mathcal{D}_y:=\left\{y\in\rr^m|y=h(x,u), (x,u)\in\mathcal{D}_G\right\}
	\end{equation}
	denote the image of $\mathcal{D}_G$ under the map $h(x,u)$. 
	
	Further, we assume that the interconnected system \eqref{eq:c5-dae} possesses at least one equilibrium in $\mathcal{D}_G$ and the set of equilibrium is bounded. Define the equilibrium set in the dissipative region as
	\begin{equation}\label{eq:E}
		\mathscr{E}:=\left\{(x,u)\in\mathcal{D}_G|f(x,u)=0\right\}
	\end{equation}
	\begin{assumption}\label{as:c5-eqexist}
		$\mathscr{E}$ is non-empty and bounded.
	\end{assumption}
	
	We assume as standard that in the closure of $\mathcal{D}$, the following non-singular condition holds such that the interconnection is well-posed.
	
	\begin{assumption}\label{as:c5-nonsingular}
		For any $(x,u)\in\overline{\mathcal{D}}$, $\det\left(\frac{\partial g(x,u)}{\partial u}\right)\neq0$
	\end{assumption}
	\subsection{From Delta Dissipativity to Stability}
	The following theorem shows that the delta dissipativity of each subsystem \eqref{eq:c5-dbus} \eqref{eq:c5-sbus}, together with a certain property of the network coupling \eqref{eq:c5-HN}, can yield the stability of the interconnected system \eqref{eq:c5-dae}.
	
	\begin{theorem}\label{th:c5-stability}
		Consider the system \eqref{eq:c5-dae} that satisfies Assumption \ref{as:c5-eqexist} and Assumption \ref{as:c5-nonsingular}. If 
		\begin{enumerate}
			\item each dynamic subsystem \eqref{eq:c5-dbus} is $delta$-D($X_i,\mathcal{D}_i$), with the corresponding storage function $S_i(x_i,u_i)$;
			\item each static subsystem \eqref{eq:c5-sbus} is $delta$-D($X_i,\mathcal{D}_i$);
			\item there exist $p_i>0$  $i=1,\dots,N$ such that for any $y\in\mathcal{D}_y$
			\begin{equation}\label{eq:c5-global}
				\begin{bmatrix}
					\frac{\partial h_{net}}{\partial y}\\-I
				\end{bmatrix}^\text{T}P_\pi^\text{T}\text{blkdiag}(p_1X_1,\dots,p_{N}X_{N})P_\pi\begin{bmatrix}
					\frac{\partial h_{net}}{\partial y}\\-I
				\end{bmatrix}
				\preceq0,
			\end{equation}
			where $P_\pi$ is the permutation matrix such that $P_\pi\col(u,y)=\col(u_1,y_1\dots,u_N,y_N)$,
		\end{enumerate}
		then the following conclusions hold:
		\begin{enumerate}
			\item The equilibrium set $\mathscr{E}$ is asymptotically stable \footnote{The asymptotical stability of the set $\mathscr{E}$ means \cite{khalil2002nonlinear} that for any $\varepsilon>0$, there exists $\delta>0$ such that $\text{dist}((x(0),u(0)),\mathscr{E})<\delta$ implies $\text{dist}\left((x(t),u(t)),\mathscr{E}\right)<\varepsilon$, $\forall t\geq0$, and that when $t\to\infty$ $\text{dist}\left((x(t),u(t)), \mathscr{E}\right)\to0$. Here, $\text{dist}(p,A)=\inf_{q\in A}\|p-q\|$ denotes the distance from the point $p$ to the set $A$.} ;
			\item Every isolated equilibrium point in $\mathscr{E}$ is asymptotically stable;
			\item Define the weighted sum of all storage functions $S(x,u):=\sum_{i=1}^{N_d}p_iS_i(x_i,u_i)$ and let $\bar{l}:=\min_{(x,u)\in\partial\mathcal{D}_G}S(x,u)$. For any level value $l<\bar{l}$, the sublevel set ${S}_l^{-1}$ is a positive invariant set of the system \eqref{eq:c5-dae} and any initial point starting in ${S}_l^{-1}$ converges to $\mathscr{E}$ as $t\to+\infty$.
		\end{enumerate}
	\end{theorem}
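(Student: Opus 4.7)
The plan is to build a composite storage function $S(x,u) := \sum_{i=1}^{N_d} p_i S_i(x_i,u_i)$, show that along any trajectory of the DAE \eqref{eq:c5-dae} remaining in $\mathcal{D}_G$ it is squeezed between class $\mathcal{K}$ functions of $\|f(x,u)\|$ and satisfies $\dot S \le -\bar{\gamma}(\|f(x,u)\|)$, and then invoke the augmented synchronization stability result \cite[Thm.~6]{10239448} to deliver all three conclusions.

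First, I would aggregate the dissipation inequalities. For each dynamic subsystem $i \le N_d$, Definition~\ref{de:IODP} gives $\dot S_i \le w_i(\dot u_i,\dot y_i) - \gamma_i(\|f_i(x_i,u_i)\|)$ pointwise on $\mathcal{D}_i$. Weighting by $p_i > 0$ and summing yields
\begin{equation*}
\dot S \le \sum_{i=1}^{N_d} p_i\, w_i(\dot u_i,\dot y_i) - \sum_{i=1}^{N_d} p_i\, \gamma_i(\|f_i\|).
\end{equation*}
For each static subsystem, Definition~\ref{de:IODPs} gives $p_i w_i(\dot u_i,\dot y_i) \ge 0$, so these terms can be padded into the sum from above. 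Collecting the $2m$-vector $\col(\dot u_1,\dot y_1,\ldots,\dot u_N,\dot y_N) = P_\pi\col(\dot u,\dot y)$, the full supply-rate sum becomes a single quadratic form in $P_\pi\col(\dot u,\dot y)$ with matrix $\mathrm{blkdiag}(p_1X_1,\ldots,p_NX_N)$.

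Next, I would exploit the feedback interconnection to eliminate $\dot u$. The algebraic equation $g(x,u) = u + h_{net}(h(x,u)) = 0$ holds identically along DAE trajectories, so under Assumption~\ref{as:c5-nonsingular} (which lets me view $u$ as a smooth function of $x$ locally via the implicit function theorem) I can differentiate $u = -h_{net}(y)$ to obtain $\dot u = -\tfrac{\partial h_{net}}{\partial y}\dot y$, hence $\col(\dot u,\dot y) = -\col(\tfrac{\partial h_{net}}{\partial y},-I)\,\dot y$. Substituting into the quadratic form and applying the coupling inequality \eqref{eq:c5-global} (valid because $y = h(x,u) \in \mathcal{D}_y$ whenever $(x,u) \in \mathcal{D}_G$) shows $\sum_{i=1}^{N} p_i w_i \le 0$ pointwise. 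Combined with the previous step, this gives $\dot S \le -\sum_{i=1}^{N_d} p_i \gamma_i(\|f_i\|) \le -\bar{\gamma}(\|f(x,u)\|)$ for the class $\mathcal{K}$ function $\bar{\gamma}(r) := \min_i p_i\gamma_i(r/\sqrt{N_d})$; a parallel aggregation of the bounds in Definition~\ref{de:IODP}.1 produces class $\mathcal{K}$ bounds $\bar{\alpha}(\|f\|) \le S \le \bar{\beta}(\|f\|)$, so $S$ vanishes precisely on the equilibrium set $\mathscr{E}$.

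Finally, I would apply \cite[Thm.~6]{10239448}: the DAE is well-posed on $\overline{\mathcal{D}}$, $\mathscr{E}$ is non-empty and bounded, and the Lyapunov-like function $S$ and its decay rate $\bar{\gamma}$ satisfy exactly the hypotheses of that theorem. This yields asymptotic stability of $\mathscr{E}$ (claim~1) together with the invariance of every sublevel set $S_l^{-1}$ for $l < \bar{l}:=\min_{\partial\mathcal{D}_G} S$ and attraction to $\mathscr{E}$ from within it (claim~3). Claim~2 follows at once: if $(x^*,u^*) \in \mathscr{E}$ is isolated, then in a small enough neighborhood containing no other equilibrium, convergence to $\mathscr{E}$ forces convergence to $(x^*,u^*)$. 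The main obstacle I anticipate is step two—carefully tracking the permutation $P_\pi$ and reconciling the differentiated network relation $\dot u = -\tfrac{\partial h_{net}}{\partial y}\dot y$ with the supply-rate decomposition—since this is where Assumption~\ref{as:c5-nonsingular} and the quadratic structure \eqref{eq:X} are fused with the coupling condition \eqref{eq:c5-global}; the rest is a clean aggregation plus a direct citation.
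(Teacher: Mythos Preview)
Your proposal is correct and follows essentially the same route as the paper: construct the composite storage $S=\sum p_iS_i$, use the individual delta-dissipation inequalities together with the coupling condition \eqref{eq:c5-global} and the differentiated network relation $\dot u=-\tfrac{\partial h_{net}}{\partial y}\dot y$ to obtain $\dot S\le -\bar\gamma(\|f\|)$ and class-$\mathcal K$ sandwich bounds on $S$, then invoke \cite[Thm.~6]{10239448}. The paper packages your first two steps into a separate lemma (Lemma~\ref{th:c5-couple} in the appendix) and, for conclusion~1, spells out an explicit $\varepsilon$--$\delta$ argument rather than citing \cite[Thm.~6]{10239448} for set stability directly, but the substance is identical; your aggregated class-$\mathcal K$ function $\bar\gamma(r)=\min_i p_i\gamma_i(r/\sqrt{N_d})$ is a valid alternative to the paper's $\gamma(r)=\min_{D(r)}\sum p_i\gamma_i(r_i)$.
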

	\begin{proof}
		1) For any $\varepsilon>0$, choose $r\in(0,\varepsilon]$ such that
		\begin{equation*}
			B_r:=\left\{(x,u)\in\rr^n\times\rr^m|\text{dist}\left((x,u),\mathscr{E}\right)\leq r\right\}\subset \mathcal{D}_G
		\end{equation*}
		Let
		\begin{equation*}
			a:=\min_{\text{dist}\left((x,u),\mathscr{E}\right)=r}S(x,u)
		\end{equation*}
		It follows from Lemma \ref{th:c5-couple} that $a>0$. For any $b\in(0,a)$, consider the sub-level set\footnote{The sub-level set $S_b^{-1}$ may have multiple disconnected branches in $\rr^n\times\rr^m$. Here, only those branches that intersect with $\mathcal{D}_G$ are considered. And by Assumption \ref{as:c5-eqexist} such a branch must exist.}
		\begin{equation*}
			S_b^{-1}=\left\{(x,u)\in\rr^n\times\rr^m|S(x,u)\leq b\right\}
		\end{equation*}
		Clearly, we have $\mathscr{E}\subset S_b^{-1}$. And $S_b^{-1}$ must be in the interior of $B_r$, i.e., $S_b^{-1}\subset \text{Int}(B_r)$, as shown in Fig. \ref{fig:C5-relation of sets}.
		\begin{figure}[htb]
			\centering
			\includegraphics[width=0.5\linewidth]{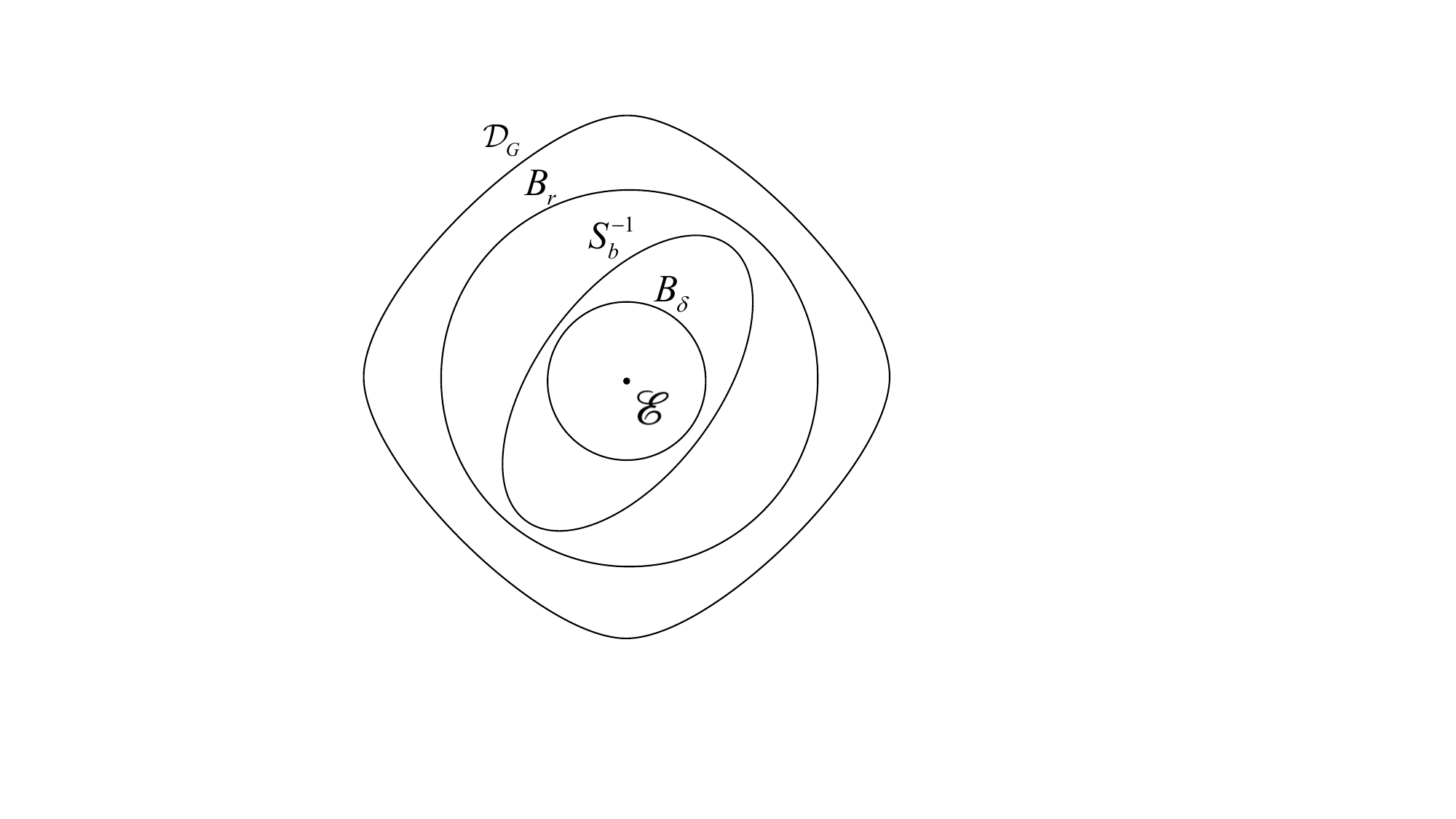}
			\caption{Relations of the sets in the proof of Theorem \ref{th:c5-stability}.}
			\label{fig:C5-relation of sets}
		\end{figure}
		If not, $S_b^{-1}$ intersects the boundary $\partial B_r$ of $B_r$ at a point $(\hat{x},\hat{u})$. At that point $S(\hat{x},\hat{u})\geq a>b$, contradicting $(\hat{x},\hat{u})\in S_b^{-1}$.
		
		By the continuity of $S(x,u)$ and the fact that $S(x,u)=0,\forall (x,u)\in\mathscr{E}$, there exists $\delta>0$ such that
		\begin{equation*}
			\text{dist}\left((x,u),\mathscr{E}\right)\leq\delta\implies S(x,u)<b
		\end{equation*}
		Hence, $\mathscr{E}\subset B_\delta \subset S_b^{-1}\subset \text{Int}(B_r) \subset \mathcal{D}_G$. It follows from \cite[Theorem 6]{10239448} that $S_b^{-1}$ is a positive invariant set and is an estimate of the $f$-RoA, i.e., $(x(0),u(0))\in S_b^{-1}$ implies $(x(t),u(t))\in S_b^{-1},\forall t\geq0$ and $f(x(t),u(t))\to0$ as $t\to\infty$. Thus, $\text{dist}((x(0),u(0)),\mathscr{E})<\delta$ implies $\text{dist}\left((x(t),u(t)),\mathscr{E}\right)<r\leq\varepsilon,\forall t\geq0$.
		
		By Assumption \ref{as:c5-eqexist}, $\mathscr{E}$ is bounded. Therefore $f(x(t),u(t))\to0$ implies $\text{dist}((x(t),u(t)),\mathscr{E})\to0$. Thus, $\text{dist}((x(0),u(0)),\mathscr{E})<\delta$ implies $\text{dist}\left((x(t),u(t)),\mathscr{E}\right)\to0$ as $t\to\infty$. This proves that $\mathscr{E}$ is asymptotically stable.
		
		2) Suppose $(x^*,u^*)\in\mathscr{E}$ is an isolated equilibrium point. Since $(x^*,u^*)$ is isolated, there exists sufficiently small $b>0$ such that a connected branch of the sub-level set $S_b^{-1}$ contains no other equilibrium point except $(x^*,u^*)$. Therefore, by the same argument of 1), one can prove that $(x^*,u^*)$ is asymptotically stable.
		
		3) It suffices to show that ${S}_l^{-1}\subset\mathcal{D}_G$ for any level value $l<\bar{l}$. Since $\mathscr{E}$ is nonempty, the intersection of any connected branch of ${S}_l^{-1}$ with $\mathcal{D}_G$ is nonempty. Thus, if ${S}_l^{-1}\not\subset\mathcal{D}_G$, then there must be a point $(\hat{x},\hat{u})$ where ${S}_l^{-1}$ and $\partial\mathcal{D}_G$ intersect. At that point $S(\hat{x},\hat{u})\geq \bar{l}>l$, which contradicts $(\hat{x},\hat{u})\in S_l^{-1}$. Therefore, for any $l<\bar{l}$, ${S}_l^{-1}\subset\mathcal{D}_G$. Thus by \cite[Theorem 6]{10239448}, $S_l^{-1}$ is a positive invariant set and an estimate of $f$-RoA. Thus, if $(x(0),u(0))\in{S}_l^{-1}$, then $(x(t),u(t))\in{S}_l^{-1}\subset\mathcal{D}_G$, $\forall t\geq0$, and $f(x(t),u(t))\to0$ as $t\to\infty$. Since $\mathscr{E}$ is bounded by Assumption \ref{as:c5-eqexist}, $f(x(t),u(t))\to0$ implies $(x(t),u(t))\to\mathscr{E}$, which completes the proof.
	\end{proof}
	
	The above theorem links delta dissipativity to the stability of an equilibria set. It decomposes the system-wide stability into local delta dissipative conditions and the coupling condition \eqref{eq:c5-global}. Each subsystem can verify the local condition without knowing the equilibrium that depends on the interconnection. The coupling condition \eqref{eq:c5-global} abstracts each subsystem by the dissipative matrix $X_i$ and involves the network differential information $\partial h_{net}/\partial y$. 
	
	We remark that the coupling condition \eqref{eq:c5-global} should hold for all $y\in\mathcal{D}_y$. Fortunately, the power network coupling equations \eqref{eq:c5-ps-HN} are linear since we choose the port variables $u$ and $y$ as voltages and currents. In such a case, $\partial h_{net}/\partial y$ is a constant matrix and hence \eqref{eq:c5-global} is independent of $y$.  

	\section{Stability Conditions for Power Systems}

	Based on the above theory, we are now ready to present compositional and equilibrium-free conditions to guarantee the stability of power system \eqref{eq:c5-system}.
	\subsection{Stability Conditions}\label{sec:stability conditions for power system}
	Consider the following conditions for power system \eqref{eq:c5-system}.
	\begin{condition}[Dynamic bus]\label{con:c5-3}
		For each $i\in\mathcal{V}_1$, the dynamic subsystem \eqref{eq:c5-ps-dbus} satisfies $delta$-D($X_i,\mathcal{D}_i$) for some symmetric matrix $X_i\in\rr^{4\times4}$ and some region $\mathcal{D}_i\subset\rr^{n_i}\times\rr^{2}$, with the corresponding storage function $S_i(x_i,u_i)$.
	\end{condition}
	\begin{condition}[Static bus]\label{con:c5-4}
		For each $i\in\mathcal{V}_2$, the static subsystem \eqref{eq:c5-ps-sbus} satisfies $delta$-D($X_i,\mathcal{D}_i$) for some symmetric matrix $X_i\in\rr^{4\times4}$ and some region $\mathcal{D}_i\subset\rr^{2}$.
	\end{condition}
	\begin{condition}[Coupling]\label{con:c5-5}
		There exist $p_i>0$, $i\in\mathcal{V}$ such that
		\begin{equation}\label{eq:couple-C}
			\begin{bmatrix}
				-C\\I
			\end{bmatrix}^\text{T}P_\pi^\text{T}\text{blkdiag}(p_1X_1,\dots,p_{N}X_{N})P_\pi\begin{bmatrix}
				-C\\I
			\end{bmatrix}\preceq0,
		\end{equation}
		where $C$ is a constant matrix defined as in \eqref{eq:c5-partialhnet}, and $P_\pi$ is the permutation matrix such that $P_\pi\col(u,y)=\col(u_1,y_1\dots,u_{N},y_{N})$.
	\end{condition}
	\begin{remark}
		The algebraic coupling condition requires only the network parameters and the dissipativity metrics $X_i$ of each grid-connected device. Since no detailed model of devices is required, it provides a privacy-preserving framework to incorporate massive heterogeneous devices in large power systems.
		Note that the coupling condition is essentially a constant linear matrix inequality. Justification of Condition \ref{con:c5-5} can be transformed into a convex optimization problem, which can be efficiently solved centrally or distributedly \cite{molzahn2017survey}. The coupling condition is also independent of equilibrium. One needs to revalidate the condition only when the network topology or the devices' dissipativity changes.
	\end{remark}
	\begin{theorem}\label{th:c5-xiet}
		Consider a power system \eqref{eq:c5-system} that satisfies Assumption \ref{as:c5-eqexist} and \ref{as:c5-nonsingular}. If Condition \ref{con:c5-3}, \ref {con:c5-4}, and \ref{con:c5-5} are satisfied, then the following conclusion holds:
		\begin{enumerate}
			\item The equilibrium set $\mathscr{E}$ is asymptotically stable;
			\item Every isolated equilibrium point in $\mathscr{E}$ is asymptotically stable.
		\end{enumerate}
	\end{theorem}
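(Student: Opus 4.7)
My plan is to derive Theorem \ref{th:c5-xiet} as a direct specialization of Theorem \ref{th:c5-stability} to the power system setting. The key observation is that the power system DAE model \eqref{eq:c5-system} is literally an instance of the generic interconnected DAE model \eqref{eq:c5-dae}: the dynamic bus model \eqref{eq:c5-ps-dbus} plays the role of \eqref{eq:c5-dbus}, the static bus model \eqref{eq:c5-ps-sbus} plays the role of \eqref{eq:c5-sbus}, and the network map \eqref{eq:c5-ps-HN} plays the role of $H_{net}$ in \eqref{eq:c5-HN}, with the same feedback wiring $u_{net}=y,\,u=-y_{net}$. Conditions \ref{con:c5-3} and \ref{con:c5-4} are exactly hypotheses (1) and (2) of Theorem \ref{th:c5-stability}. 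Assumptions \ref{as:c5-eqexist} and \ref{as:c5-nonsingular} are carried over verbatim. So the only genuine verification left is that Condition \ref{con:c5-5} implies hypothesis (3) of Theorem \ref{th:c5-stability}.

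First I would handle hypothesis (3). Because $h_{net}$ in \eqref{eq:c5-ps-HN} is an affine-linear map of $u_{net}=y$, its Jacobian $\partial h_{net}/\partial y$ equals the constant matrix $C$ defined in \eqref{eq:c5-partialhnet} for every $y$, and in particular for every $y\in\mathcal{D}_y$. Hence the $y$-dependent matrix inequality \eqref{eq:c5-global} collapses to a single constant LMI with $\partial h_{net}/\partial y$ replaced by $C$. Next I would reconcile the sign convention: the column vectors $\bigl[\tfrac{\partial h_{net}}{\partial y};-I\bigr]$ and $\bigl[-C;\,I\bigr]$ differ only by an overall factor of $-1$, so they induce the same quadratic form with the weighting $P_\pi^{\text{T}}\,\mathrm{blkdiag}(p_1X_1,\ldots,p_NX_N)\,P_\pi$. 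Therefore \eqref{eq:couple-C} is exactly \eqref{eq:c5-global} specialized to the power network, with the same positive weights $p_i$.

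With all three hypotheses of Theorem \ref{th:c5-stability} in place and both Assumptions verified, conclusions (1) and (2) of that theorem are inherited directly and yield conclusions (1) and (2) of Theorem \ref{th:c5-xiet}, concerning asymptotic stability of the equilibria set $\mathscr{E}$ and of each isolated equilibrium inside $\mathscr{E}$.

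I do not anticipate a serious obstacle: the substantive mathematical content (dissipation bookkeeping, set-stability via sub-level sets, invocation of \cite[Theorem 6]{10239448}) is already discharged inside the proof of Theorem \ref{th:c5-stability}. The only nuisance is clerical, namely carefully matching the permutation $P_\pi$, the sign of $C$, and the reduction from $y$-dependent to constant LMI. If anything warrants a brief extra remark in the write-up, it is that the linearity of the common-DQ network coupling is what lets the coupling condition be checked once globally rather than pointwise on $\mathcal{D}_y$; this is precisely the payoff of the voltage-current formulation highlighted in Remark~1.
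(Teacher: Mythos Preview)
Your proposal is correct and follows exactly the paper's approach: the paper's proof of Theorem~\ref{th:c5-xiet} is the one-line statement ``It is a direct corollary of Theorem~\ref{th:c5-stability},'' and you have simply spelled out the bookkeeping (linearity of $h_{net}$ making $\partial h_{net}/\partial y=C$ constant, the sign match between $[\partial h_{net}/\partial y;-I]$ and $[-C;I]$, and the verbatim identification of Conditions~\ref{con:c5-3}--\ref{con:c5-5} with hypotheses (1)--(3)) that underlies that corollary.
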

	\begin{proof}
		It is a direct corollary of Theorem \ref{th:c5-stability}.
	\end{proof}
	The above theorem provides a compositional and equilibrium-independent approach to analyze the power system stability. Each grid-connected device can first locally verify whether it satisfies $delta$-D($X_i,\mathcal{D}_i$), then one can collect $X_i$ of all devices and verify the coupling condition. If all three conditions are verified, then one can directly conclude that the equilibrium set $\mathscr{E}$ of the entire power system is asymptotically stable and any isolated equilibrium in $\mathscr{E}$ is asymptotically stable. 
	\subsection{Discussion on Equilibrium-Free}
	The salient equilibrium-free feature of our approach is reflected in two aspects: requirement and result. 
	
	First, the stability requirement is free of equilibrium information. Both the local conditions 1 and 2 do not require knowing the equilibrium of the interconnected system. This facilitates establishing a grid code that regulates the dynamics of grid-connected devices to guarantee system-wide stability. Such a code should be applicable to different operating scenarios, especially to variations in the grid side. 

	Second, the stability result is free of specific equilibrium. Our approach certificates stability of all possible equilibria in $\mathcal{D}$ rather than any specific one. This enlarges the application scope of the proposed method, which is in favor when the system has multiple equilibria or equilibria continuum.
	
	The equilibrium-free property also offers a significant advantage over equilibrium-dependent methods when analyzing stability under varying operating conditions. In equilibrium-dependent approaches, any change in the equilibrium point necessitates a complete re-analysis to determine stability, as the new equilibrium must be evaluated separately. In contrast, our equilibrium-set-oriented method allows stability to be directly assessed. As long as the new equilibrium point remains within the dissipative region $\mathcal{D}_i$ of each subsystem, stability is ensured without the need for further analysis. An illustrative example will be given in Section \ref{sec:case}.

	\subsection{Transient Stability Analysis}\label{sec:transient stability}
	When the proposed conditions are satisfied, the following theorem shows that the sublevel set of storage function can estimate the region of attraction to the equilibrium set.
	\begin{theorem}\label{th:c5-region}
		Consider a power system \eqref{eq:c5-system} that satisfies Assumption \ref{as:c5-eqexist}-\ref{as:c5-nonsingular} and Condition \ref{con:c5-3}-\ref{con:c5-5}. Let $S(x,u)=\sum_{i\in\mathcal{V}_1}p_iS_i(x_i,u_i)$ and $\bar{l}:=\min_{(x,u)\in\partial\mathcal{D}_G}S(x,u)$. Then for all level value $l<\bar{l}$, any trajectory $(x(t),u(t))$ of \eqref{eq:c5-system} starting from ${S}_l^{-1}$ has the following properties:
		\begin{enumerate}
			\item $(x(t),u(t))\in\mathcal{D}_G$, $\forall t\geq0$;
			\item $(x(t),u(t))\to\mathscr{E}$ as $t\to\infty$.	
		\end{enumerate}
	\end{theorem}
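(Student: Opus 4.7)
The plan is to show that Theorem \ref{th:c5-region} is essentially a direct corollary of part 3) of Theorem \ref{th:c5-stability}, much as Theorem \ref{th:c5-xiet} was a corollary of parts 1) and 2). The work is therefore almost entirely bookkeeping: verifying that the power-system setup in Section \ref{sec:2} is a particular instance of the general interconnection in Section \ref{sec:5.2} and that Conditions \ref{con:c5-3}--\ref{con:c5-5} match hypotheses 1)--3) of Theorem \ref{th:c5-stability}.

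First I would identify the data. The dynamic buses \eqref{eq:c5-ps-dbus} play the role of the dynamic subsystems \eqref{eq:c5-dbus} and the static buses \eqref{eq:c5-ps-sbus} the role of \eqref{eq:c5-sbus}, with $N_d=|\mathcal{V}_1|$ and $N_s=|\mathcal{V}_2|$. The network coupling \eqref{eq:c5-ps-HN} is $h_{net}(u_{net})=C\,u_{net}$ with $C$ as in \eqref{eq:c5-partialhnet}, so $\partial h_{net}/\partial y\equiv C$ is constant and \eqref{eq:c5-global} reduces exactly to \eqref{eq:couple-C}. The feedback interconnection $g(x,u)=u+h_{net}(h(x,u))$ of Section \ref{sec:2} is literally the $g$ in \eqref{eq:g}, and the resulting DAEs \eqref{eq:c5-system} are \eqref{eq:c5-dae}. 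Assumptions \ref{as:c5-eqexist}--\ref{as:c5-nonsingular} are assumed to hold in the hypothesis of Theorem \ref{th:c5-region}, and Conditions \ref{con:c5-3}, \ref{con:c5-4}, \ref{con:c5-5} are the pointwise translations of hypotheses 1), 2), 3) of Theorem \ref{th:c5-stability}.

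With the identifications above, the weighted storage function $S(x,u)=\sum_{i\in\mathcal{V}_1}p_iS_i(x_i,u_i)$ in Theorem \ref{th:c5-region} is precisely the one constructed in part 3) of Theorem \ref{th:c5-stability}, and the threshold $\bar{l}=\min_{(x,u)\in\partial\mathcal{D}_G}S(x,u)$ coincides with the one defined there. I would then invoke part 3) of Theorem \ref{th:c5-stability} directly: for any $l<\bar{l}$, the sublevel set $S_l^{-1}$ is a positive invariant set contained in $\mathcal{D}_G$, so any trajectory $(x(t),u(t))$ starting in $S_l^{-1}$ satisfies $(x(t),u(t))\in\mathcal{D}_G$ for all $t\geq 0$, proving claim 1). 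Since Assumption \ref{as:c5-eqexist} guarantees $\mathscr{E}$ is nonempty and bounded, the same part of Theorem \ref{th:c5-stability} yields $f(x(t),u(t))\to 0$ and hence $(x(t),u(t))\to\mathscr{E}$ as $t\to\infty$, which is claim 2).

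There is no real obstacle; the only thing to be careful about is that one genuinely needs to write out the correspondence between the two setups (especially that $P_\pi$, the block structure $\mathrm{blkdiag}(p_1X_1,\dots,p_NX_N)$, and the choice of $\mathcal{D}=\prod_i\mathcal{D}_i$ are the same objects in both theorems) so that invoking Theorem \ref{th:c5-stability} is legitimate rather than merely plausible. Once that one-to-one correspondence is recorded, the proof reduces to the single sentence ``apply part 3) of Theorem \ref{th:c5-stability}.''
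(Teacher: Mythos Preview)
Your proposal is correct and matches the paper's own proof, which simply states that Theorem \ref{th:c5-region} is a direct corollary of Theorem \ref{th:c5-stability}. The bookkeeping you outline (identifying the power-system data with the general interconnection of Section \ref{sec:5.2}, noting $\partial h_{net}/\partial y\equiv C$, and then invoking part 3) of Theorem \ref{th:c5-stability}) is exactly what is implicitly behind that one-line proof.
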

	\begin{proof}
		It is a direct corollary of Theorem \ref{th:c5-stability}. 
	\end{proof}
	The above Theorem provides a direct method to assess the transient stability of the power system without knowing the post-fault equilibrium. Assuming that the power system \eqref{eq:c5-system} is subjected to a large disturbance that makes the system state deviate from the original stable operating point. Let the initial value of the system after the disturbance be $(x^0,u^0)$. The above Theorem shows that if $(x^0,u^0)\in\mathcal{D}_G$ and $S(x^0,u^0)<\bar{l}$, the trajectory of the power system after disturbance will always remain within $\mathcal{D}_G$ and eventually converge to the set of equilibrium points, i.e., to restore a stable operation state. Note that we do not require knowledge of the post-fault equilibrium in the assessment. Consequently, the conclusion of the assessment is convergence towards the post-fault equilibria set rather than towards any specific equilibrium. Such an equilibrium independence is in favor when it is hard to determine the post-fault equilibrium prior, especially when multiple equilibria exist. We refer interested readers to \cite{10239448} for a more detailed discussion on equilibrium-independent transient stability analysis of power systems.
	
	Theorem \ref{th:c5-region} also provides a possibility to assess the transient stability in a distributed manner. The construction of the function $S(x,u)$ and the region $\mathcal{D}$ is fully distributed. $S(x,u)$ is the weighted sum of $S_i(x_i,u_i)$ of dynamic subsystems, which only relate to local variables. $\mathcal{D}$ is the Cartesian product of $\mathcal{D}_i$ of all dynamic and static subsystems, which are mutually orthogonal. Therefore, there is no need to consider the coupling relation when constructing $S(x,u)$ and $\mathcal{D}$. We emphasize, however, the algebraic manifold $\mathscr{G}$ couples all subsystems. Hence, the critical level value $\bar{l}=\min_{(x,u)\in\partial\mathcal{D}_G}S(x,u)$ is related to each grid-connected device and the power network. After obtaining $\bar{l}$, each dynamic subsystem can calculate $S_i(x_i^0,u_i^0)$ locally then sum up to $S(x^0,u^0)$ to assess transient stability based on Theorem \ref{th:c5-region}. It is possible to avoid centralized calculation of $\bar{l}$. It remains to be our future work to further explore how to calculate $\bar{l}$ effectively in a distributed manner.
	
	Besides stability assessment, Theorem \ref{th:c5-region} also points out a localized control principle to enhance transient stability. Theorem \ref{th:c5-region} indicates that for the same $S(x,u)$, a larger region $\mathcal{D}_G$ guarantees a larger region of attraction, and hence achieves better transient stability. Since the algebraic constraints are fixed, expanding $\mathcal{D}_G$ essentially requires expanding $\mathcal{D}_i$ of each subsystem, i.e., enlarging the region in which each grid-connected device satisfies delta dissipativity. Therefore, a localized control target to enhance the system-wide transient stability would be to maximize $\mathcal{D}_i$ without changing $S_i(x_i^0,u_i^0)$ and $X_i$.
	
	\section{Illustrative Example}\label{sec:case}
	To better illustrate the concept of compositional analysis and equilibrium independence, we first consider a single machine single load (SMSL) power system as shown in Fig. \ref{fig:c5-2bus}. 
	\begin{figure}[htb]
		\centering
		\includegraphics[width=.8\hsize]{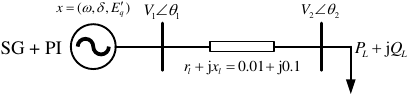}
		\caption{The SMSL 2-bus system. A SG with frequency PI regulator is connected to a load through a transmission line with impedance $r_l + \text{j}x_l$.
		}
		\label{fig:c5-2bus}
	\end{figure}
	\subsection{Settings}
	The system consists of a constant PQ load and a SG modeled by the flux-decay 3rd-order dynamics
	\begin{equation}\label{eq:c5-sg1}
		\left\lbrace 
		\begin{aligned}
			\dot{\delta}&=\omega \\ 
			M\dot{\omega}&=-D\omega-P^{e}+P^{m}+K_I\delta \\ 
			T'_{d0}\dot{E}'_{q}&=-E'_{q}+I_{d}(x_{d}-x'_{d})+E_{f},
		\end{aligned}\right.
	\end{equation}
	with the algebraic equation
	\begin{equation}\label{eq:c5-sg2}
		\left\lbrace 
		\begin{aligned}
			V_{d}&=E'_{q}+x'_{d}I_{q} \\ 
			V_{q}&=-x_{q}I_{d} \\ 
			P^{e}&=E'_{q}I_{d}+(x'_{d}-x_{q})I_{d}I_{q}.
		\end{aligned}\right.
	\end{equation}
	Here, $K_I\geq0$ is the coefficient of frequency integral, which aims to balance the load variation. Table \ref{tab:c5-2bus-sg} reports the parameters of the SG. The SG can be transformed into the standard input-output model with $u_1=(V_{D1},V_{Q1})^\T$ and  $y_1=(-I_{D1},-I_{Q1})^\T$, while the constant PQ load can be transformed into the standard input-output model with $u_2=(-I_{D2},-I_{Q2})^\T$ and  $y_2=(V_{D2},V_{Q2})^\T$. Details of this model and the input-output transformation can be found in Part II Example 1. 
	Here, we focus on illustrating the implications of our theory and developing deeper insights into equilibrium independence. 
	\begin{table}[h]
		\centering
		\footnotesize
		\caption{Parameters of SG in the 2-bus system (values in p.u.).}
		\begin{tabular}{ccccccccc}
			\toprule
			$M$& $D$& $T'_{d0}$& $x_{d}$& $x_{q}$& $x'_{d}$& $P^m$& $E_f$ & $K_I$\\
			\midrule
			0.41& 0.3& 5.4& 0.67& 0.40& 0.13& 0.48& 1.11 & 0.5\\ 
			\bottomrule
		\end{tabular}
		\label{tab:c5-2bus-sg}
	\end{table}
	\subsection{Verifying Equilibrium-Free Stability Conditions}
	Now we illustrate how to use the proposed compositional conditions to verify the system-wide stability, while the local devices do \emph{not} know the system-wide equilibrium. First, for the dynamic subsystem, i.e., the SG at bus 1, we need to verify Condition \ref{con:c5-3}. We consider the storage function of Krasovskii's type $S_i(x_i,u_i)=f_i(x_i,u_i)^\T \mathcal{P}_if_i(x_i,u_i)$ and verify delta dissipativity numerically (see Part II Proposition 1 for details). 
	We found that the SG is $delta$-D($X_1,\mathcal{D}_1$) with the following $\mathcal{P}_1$ and $X_1$.
	\begin{equation*}
		\mathcal{P}_1=\begin{bmatrix}
			92.44&	28.52&	-36.55\\
			28.52&	82.60&	54.99\\
			-36.55&	54.99&	1000
		\end{bmatrix},
	\end{equation*}
	\begin{equation*}
		X_1=\begin{bmatrix}
			103.62&	-97.64&	-510.82&	-92.69\\
			-97.64&	-33.48&	98.23&	-4.92\\
			-510.82&	98.23&	1000&	-7.11\\
			-92.69&	-4.92&	-7.11&	6.89
		\end{bmatrix}.
	\end{equation*}

	The dissipative region $\mathcal{D}_1$ is a nonlinear region in the five-dimensional space, the projection of which is shown in Fig.\ref{fig:2bus_3D}. 
	
	For the static subsystem constant PQ load at bus 2, we need to verify Condition \ref{con:c5-4}. Solving \eqref{eq:c5-Ps}, we find that it is delta dissipative with
	\begin{equation*}
		X_2=\begin{bmatrix}
			-990.65&	60.05&	-503.64&	100.99\\
			60.05&	-5.89&	-73.28&	-6.23\\
			-503.64&	-73.28&	-107.64&	93.60\\
			100.99&	-6.23&	93.60&	29.41
		\end{bmatrix}.
	\end{equation*}
	
	Now we check the coupling Condition \ref{con:c5-5}. Based on the input-output relation or the network, we obtain
	\begin{equation*}
		C=\begin{bmatrix}
			r_l &  -x_l  & -1   &      0\\
			x_l &  r_l  &  0 &  -1\\
			1   &      0   &      0    &     0\\
			0   & 1   &     0     &    0
		\end{bmatrix}.
	\end{equation*}
	Substituting $X_1$ and $X_2$ into \eqref{eq:couple-C}, we find the coupling condition is justified with $p_1=p_2=1$.
	
	We emphasize that our method is independent of equilibrium point information during local condition verification. Conventional approaches—such as those requiring model linearization at a subsystem’s equilibrium—depend on precise knowledge of local equilibrium $(x_i^*,u_i^*)$, which are inherently coupled to all subsystems. Consequently, our method does not aim to assert the stability of a specific equilibrium. Instead, it directly asserts the stability of all equilibria in the dissipative region, as we show in the following.
	\subsection{Compositional Stability Certification}
	It follows from Theorem \ref{th:c5-xiet} that any possible isolated equilibrium $(x^*,u^*)$ in $\mathcal{D}_1\times\mathcal{D}_2$ is asymptotically stable. Compositionally, we only need to check $(x^*,u^*_1)\in\mathcal{D}_1$ and $u^*_2\in\mathcal{D}_2$. To illustrate, we first consider the nominal load case with $P_L^0=0.5$ and $Q_L^0=0.1$. Given these parameters, the 2-bus nonlinear power system has two isolated equilibria. The first equilibrium reads $x^*=(\delta^*,\omega^*,E_{q}^{'*})=(0.1527,0,1.0118)$, $u_1^*=(V_{D1}^*,V_{Q1}^*)=(1,0)$, and $u_2^*=-(I_{D2}^*,I_{Q2}^*)=(0.4018,-0.1175)$. The second equilibrium reads $x^*=(\delta^*,\omega^*,E_{q}^{'*})=(0.1231,0,0.4757)$, $u_1^*=(V_{D1}^*,V_{Q1}^*)=(0.3443,-0.1701)$, and $u_2^*=-(I_{D2}^*,I_{Q2}^*)=(0.6664,-1.1001)$.
	Column 2 reports the certification result of our method: the first equilibrium belongs to $\mathcal{D}_1\times\mathcal{D}_2$ and hence is stable while the second low-voltage equilibrium does not belong to $\mathcal{D}_1\times\mathcal{D}_2$. This result is justified by the centralized eigenvalue analysis, which shows the first equilibrium is stable while the second one is unstable. We emphasize that, unlike eigenvalue analysis, our approach cannot determine instability since it is only a sufficient condition. In this case, the analysis of the second equilibrium shows that our method will not misjudge the unstable as the stable.
	\begin{table}[h]
		\centering
		\footnotesize
		\caption{Stability certification of different equilibria in the 2-bus system.}
		\begin{tabular}{cccc}
			\toprule
			\multirow{2}{*}{Equilibrium}  & \multicolumn{2}{c}{Compositional} &Centralized\\
			& $(x^*,u_1^*)\in\mathcal{D}_1?$ & $u_2^*\in\mathcal{D}_2?$
			&$\max\text{Re}\lambda(x^*,u^*)<0?$\\
			\midrule
			1&\checkmark&\checkmark& \checkmark,  stable\\ 
			2&$\times$& $\times$ &$\times$,  unstable\\ 
			\bottomrule
		\end{tabular}
		\label{tab:c5-2bus-equilibria}
	\end{table}
	
	\subsection{Transient Stability Analysis}
	The theoretical result in Section \ref{sec:transient stability} offers a direct method to assess the transient stability. The sublevel set of $S(x,u)$ in the dissipative region $\mathcal{D}_G$ provides a conservative estimate of the stability region. Any trajectory starting within this sublevel set will converge to the equilibria set. 
	
	To verify and illustrate, we estimate the stability region of the 2-bus system as visualized in Fig. \ref{fig:2bus_3D}. Here,  Fig. \ref{fig:2bus_3D}(A) depicts the dissipative region $\mathcal{D}_G$. The largest sublevel set $S_l^{-1}$ within it ($l=0.2104$) is the estimated stability region. Note that $\mathcal{D}_G$ is independent of $\omega$, as $\partial f/\partial x$ does not depend on $\omega$. The figure also shows the equilibrium point $x^*=(0.1527,0,1.0118)^\T$ and the trajectory starting from $x^0=(0.16,0.02,0.92)^\T\in S_l^{-1}$ in three-dimensional space. As predicted by our theory, this trajectory converges to $x^*$. Fig. \ref{fig:2bus_3D}(B) offers a two-dimensional slice at $\omega=0$, giving a more apparent view of the region boundaries. In addition, Fig. \ref{fig:2bus_3D}(C) shows the time-domain trajectories of $\Delta x(t)=x(t)-x^*$ in the transient process, which also verifies our theory.
	
	\begin{figure*}[htb]
		\centering
		\includegraphics[width=1\hsize]{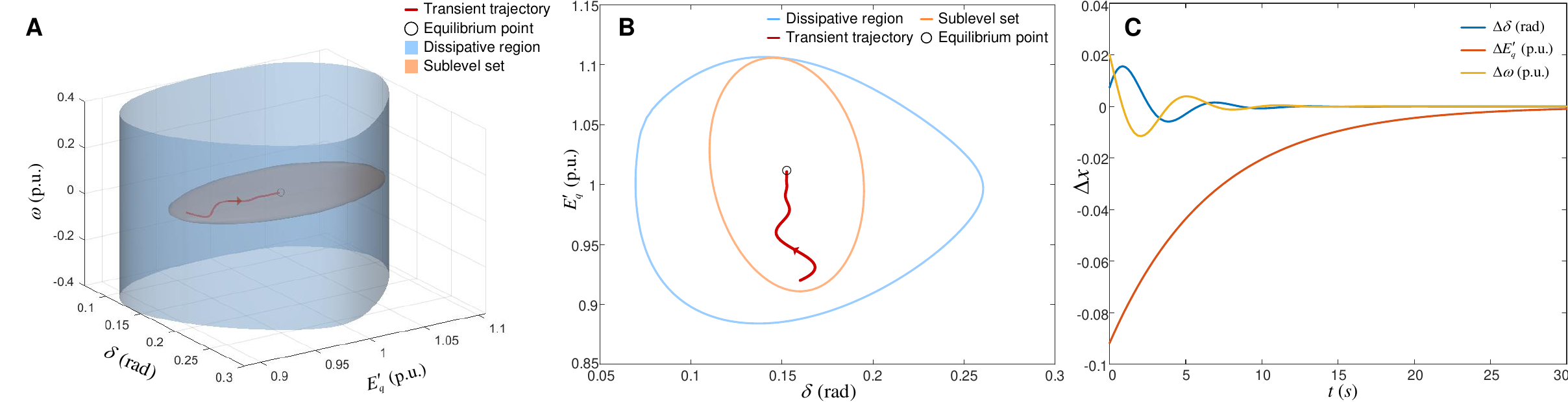}
		\caption{The stability region estimation of the 2-bus system. (A) The dissipative region $\mathcal{D}_G$ and the sublevel set $S_l^{-1}$ with $l=0.2104$ as the guaranteed stability region, along with the trajectory starting from $x^0=(0.16,0.02,0.92)^\T$ and the equilibrium point $x^*=(0.153,0,1.012)^\T$, shown in three-dimensional space. (B) The two-dimensional slice of the sub-figure A with $\omega=0$. (C) Time-domain trajectories of $\Delta x(t)=x(t)-x^*$ starting from $x^0$. }
		\label{fig:2bus_3D}
	\end{figure*}
	
	\subsection{Stability Under Varying Load}
	Now we illustrate how the equilibrium-free property facilitates stability analysis under varying operating conditions. Consider that the load at Bus 2 scales by a factor $s>0$, i.e., $P_L+\text{j}Q_L=s(P_L^0+\text{j}Q_L^0)$. This load variation leads to a shift in the system equilibrium. With the same $\mathcal{P}_i$ and $X_i$, the compositional stability verification and the dissipative region $\mathcal{D}_1$ remain unchanged. To ensure the stability of different equilibria under different operation conditions, we simply need to check whether the new equilibrium point remains within the dissipative region. This avoids repetitive stability calculations, such as eigenvalue analysis, for every new equilibrium point. 
	
	Our calculations trace the changes in equilibrium as the load factor $s$ varies. The orange curve in Fig. \ref{fig:2bus-vary-A} shows this equilibrium trajectory. The blue curve shows the projection of dissipative equilibrium region $\mathcal{D}_1^{eq}=\{(x,u_1)\in\mathcal{D}_1|f(x,u_1)=0\}$ on the $u_1$ plane. We found that when $s$ ranges between 0.901 and 1.075, the corresponding equilibrium points remain within $\mathcal{D}_1^{eq}$, as shown in Fig. \ref{fig:2bus-vary-A}. Calculation also shows that the equilibria are inside $\mathcal{D}_2$ for $s\in(0.787,1.153)$. Hence, our theory ensures the system-wide stability for all $s\in(0.901,1.075)$ without the need for repetitive equilibrium-based analysis. Fig. \ref{fig:2bus-vary-B} shows the time-domain trajectories when the load step changes, which also verifies our analysis. The limitation of our method is that no stability conclusion can be drawn for scenarios outside this range. In fact, eigenvalue analysis shows that the exact stability range of $s$ is $(0,1.564)$. Despite the conservativeness, our theory enables an equilibrium-set-oriented analytic that facilitates stability certification in varying operating conditions. 
	
	\begin{figure}[htb]
		\centering
		\includegraphics[width=0.8\hsize]{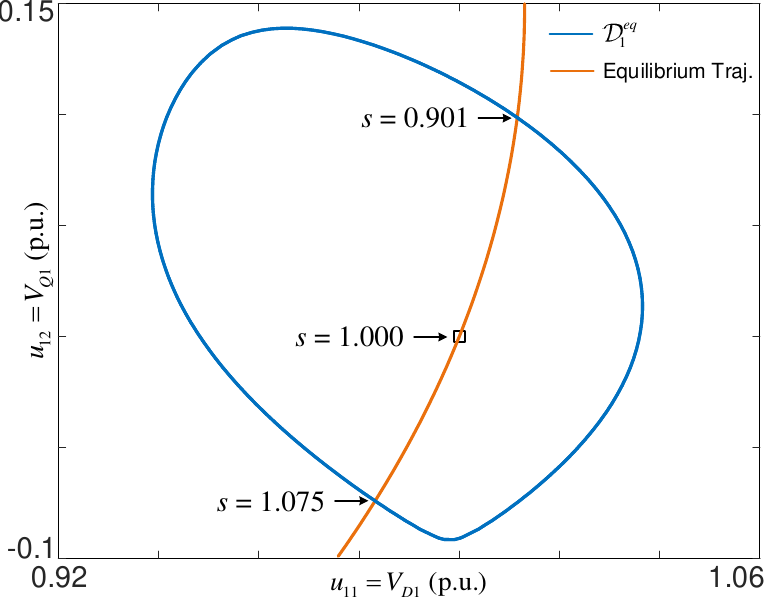}
		\caption{Dissipative region $\mathcal{D}_1^{eq}$ in the $u_1$-plane and the equilibrium trajectory as the load factor $s$ varies. For $s\in[0.901,1.075]$, the corresponding equilibria are inside the dissipative region and hence are stable.}
		\label{fig:2bus-vary-A}
	\end{figure}
	
	
	\begin{figure}[htb]
		\centering
		\includegraphics[width=0.9\hsize]{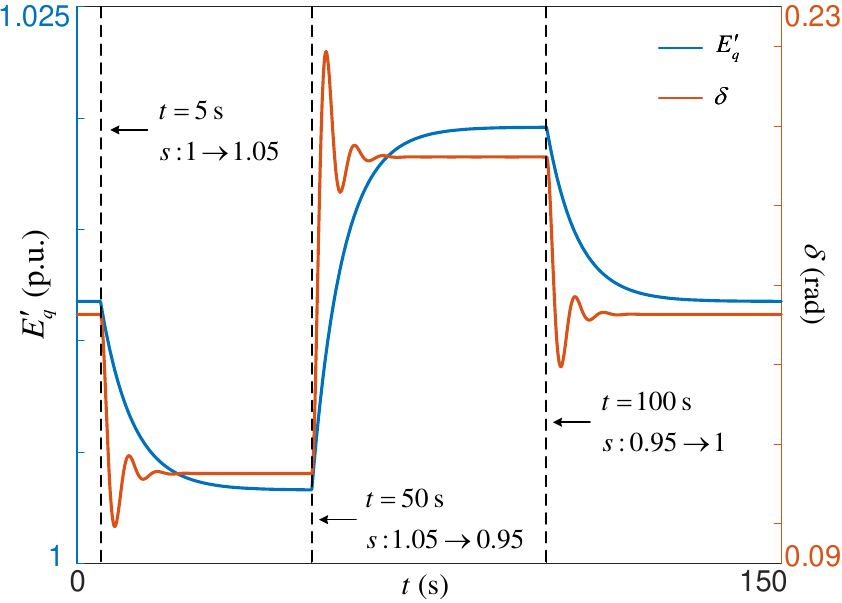}
		\caption{The load step changes from the nominal value $s=1$ to $s=1.05$, then to $s=0.95$, and finally to $s=1$, at time $t=5$s, $t=50$s, and $t=100$s, respectively. The system is stable for different operation conditions.}
		\label{fig:2bus-vary-B}
	\end{figure}
	
	In summary, the SMSL example demonstrates how to use our theory to analyze the system-wide stability in a compositional manner. Each bus can verify the local stability condition without knowing the system-wide equilibrium. The stability certification is therefore equilibrium-set-oriented, i.e., all equilibria in the dissipative region are ensured to be asymptotically stable. 
	
	\section{Concluding Remarks}
	In Part I of this paper, we developed a compositional and equilibrium-free stability analytics for nonlinear power systems. We proved that the system-wide stability in terms of equilibrium set can be decomposed into: (i) delta-dissipativity-based local conditions, which do not require explicit knowledge of the system-wide equilibrium point; and (ii) a coupling condition related to the power network.
	Our proposed stability conditions are compatible with structure-preserving models of lossy power systems, enabling compositional analysis without network reduction or restrictive assumptions on device properties. This approach was illustrated using the single machine single load benchmark, demonstrating the implications of compositional and equilibrium-set-oriented stability analysis.
	
	The results in Part I provide a theoretical foundation for addressing the limitations of centralized and equilibrium-dependent methods, paving the way for scalable and adaptable stability analysis in modern power grids. In Part II, we will propose methods for practical implementation and validate them through applications in complex power systems.
	
	One limitation of our theory is the possible conservativeness. We may alleviate it by optimizing the storage function or by the boundary expanding approach, which remains to be our future work.
	%
	%
	\bibliographystyle{IEEEtran}
	\bibliography{mybib}

\begin{thebibliography}{10}
\providecommand{\url}[1]{#1}
\csname url@samestyle\endcsname
\providecommand{\newblock}{\relax}
\providecommand{\bibinfo}[2]{#2}
\providecommand{\BIBentrySTDinterwordspacing}{\spaceskip=0pt\relax}
\providecommand{\BIBentryALTinterwordstretchfactor}{4}
\providecommand{\BIBentryALTinterwordspacing}{\spaceskip=\fontdimen2\font plus
\BIBentryALTinterwordstretchfactor\fontdimen3\font minus
  \fontdimen4\font\relax}
\providecommand{\BIBforeignlanguage}[2]{{%
\expandafter\ifx\csname l@#1\endcsname\relax
\typeout{** WARNING: IEEEtran.bst: No hyphenation pattern has been}%
\typeout{** loaded for the language `#1'. Using the pattern for}%
\typeout{** the default language instead.}%
\else
\language=\csname l@#1\endcsname
\fi
#2}}
\providecommand{\BIBdecl}{\relax}
\BIBdecl

\bibitem{Stott_Powersystemdynamic_1979}
B.~Stott, ``Power system dynamic response calculations,'' \emph{Proc. IEEE},
  vol.~67, no.~2, pp. 219--241, 1979.

\bibitem{Sastry_Hierarchicalstabilityalert_1980}
S.~Sastry and P.~Varaiya, ``Hierarchical stability and alert state steering
  control of interconnected power systems,'' \emph{IEEE Trans. Circuits Syst.},
  vol.~27, no.~11, pp. 1102--1112, 1980.

\bibitem{Chiang_Directstabilityanalysis_1995}
H.-D. Chang, C.-C. Chu, and G.~Cauley, ``Direct stability analysis of electric
  power systems using energy functions: theory, applications, and
  perspective,'' \emph{Proc. IEEE}, vol.~83, no.~11, pp. 1497--1529, 1995.

\bibitem{liu2022stability}
T.~Liu, Y.~Song, L.~Zhu, and D.~J. Hill, ``Stability and control of power
  grids,'' \emph{Annual Review of Control, Robotics, and Autonomous Systems},
  vol.~5, no.~1, pp. 689--716, 2022.

\bibitem{willems1972dissipative}
J.~C. Willems, ``Dissipative dynamical systems part i: General theory,''
  \emph{Arch. Ration. Mech. Anal.}, vol.~45, no.~5, pp. 321--351, 1972.

\bibitem{Bao_ProcessControlPassive_2007}
J.~Bao and P.~L. Lee, \emph{Process control: the passive systems
  approach}.\hskip 1em plus 0.5em minus 0.4em\relax Springer Science \&
  Business Media, 2007.

\bibitem{vanderSchaft_L2GainPassivityTechniques_2017}
A.~J. van~der Schaft and A.~Van Der~Schaft, \emph{L2-gain and passivity
  techniques in nonlinear control}.\hskip 1em plus 0.5em minus 0.4em\relax
  Springer, 2000, vol.~2.

\bibitem{arcak2016networks}
M.~Arcak, C.~Meissen, and A.~Packard, \emph{Networks of dissipative systems:
  compositional certification of stability, performance, and safety}.\hskip 1em
  plus 0.5em minus 0.4em\relax Springer, 2016.

\bibitem{8424071}
Z.~{Wang}, F.~{Liu}, J.~Z.~F. {Pang}, S.~H. {Low}, and S.~{Mei}, ``Distributed
  optimal frequency control considering a nonlinear network-preserving model,''
  \emph{IEEE Trans. Power Syst.}, vol.~34, no.~1, pp. 76--86, Jan 2019.

\bibitem{Stegink_unifyingenergybasedapproach_2016}
T.~Stegink, C.~De~Persis, and A.~van~der Schaft, ``A unifying energy-based
  approach to stability of power grids with market dynamics,'' \emph{IEEE
  Trans. Autom. Control}, vol.~62, no.~6, pp. 2612--2622, 2017.

\bibitem{wang2018distributed_coping}
Z.~Wang, S.~Mei, F.~Liu, S.~H. Low, and P.~Yang, ``Distributed load-side
  control: Coping with variation of renewable generations,'' \emph{Automatica},
  vol. 109, p. 108556, 2019.

\bibitem{Fiaz_portHamiltonianapproachpower_2013a}
S.~Fiaz, D.~Zonetti, R.~Ortega, J.~M. Scherpen, and A.~Van~der Schaft, ``A
  port-hamiltonian approach to power network modeling and analysis,''
  \emph{Eur. J. Control}, vol.~19, no.~6, pp. 477--485, 2013.

\bibitem{Caliskan_CompositionalTransientStability_2014}
S.~Y. Caliskan and P.~Tabuada, ``Compositional {{Transient Stability Analysis}}
  of {{Multimachine Power Networks}},'' \emph{IEEE Trans. Control Network
  Syst.}, vol.~1, no.~1, pp. 4--14, 2014.

\bibitem{8890862}
P.~{Yang}, F.~{Liu}, Z.~{Wang}, and C.~{Shen}, ``Distributed stability
  conditions for power systems with heterogeneous nonlinear bus dynamics,''
  \emph{IEEE Trans. Power Syst.}, vol.~35, no.~3, pp. 2313--2324, 2020.

\bibitem{he2024passivity}
X.~He and F.~D{\"o}rfler, ``Passivity and decentralized stability conditions
  for grid-forming converters,'' \emph{IEEE Trans. Power Syst.}, vol.~39,
  no.~3, pp. 5447--5450, 2024.

\bibitem{huang2024gain}
L.~Huang, D.~Wang, X.~Wang, H.~Xin, P.~Ju, K.~H. Johansson, and F.~D{\"o}rfler,
  ``Gain and phase: Decentralized stability conditions for power
  electronics-dominated power systems,'' \emph{IEEE Trans. Power Syst.}, 2024.

\bibitem{baros2020distributed}
S.~Baros, A.~Bernstein, and N.~D. Hatziargyriou, ``Distributed conditions for
  small-signal stability of power grids and local control design,'' \emph{IEEE
  Trans. Power Syst.}, vol.~36, no.~3, pp. 2058--2067, 2020.

\bibitem{Song_DistributedFrameworkStability_2017}
Y.~Song, D.~J. Hill, T.~Liu, and Y.~Zheng, ``A distributed framework for
  stability evaluation and enhancement of inverter-based microgrids,''
  \emph{IEEE Trans. Smart Grid}, vol.~8, no.~6, pp. 3020--3034, 2017.

\bibitem{schweidel2021compositional}
K.~S. Schweidel and M.~Arcak, ``Compositional analysis of interconnected
  systems using delta dissipativity,'' \emph{IEEE Control Syst. Lett.}, vol.~6,
  pp. 662--667, 2021.

\bibitem{kawano2020krasovskii}
Y.~Kawano, K.~C. Kosaraju, and J.~M. Scherpen, ``Krasovskii and shifted
  passivity-based control,'' \emph{IEEE Trans. Autom. Control}, vol.~66,
  no.~10, pp. 4926--4932, 2020.

\bibitem{kawano2023krasovskii}
Y.~Kawano, M.~Cucuzzella, S.~Feng, and J.~M. Scherpen, ``Krasovskii and shifted
  passivity based output consensus,'' \emph{Automatica}, vol. 155, p. 111167,
  2023.

\bibitem{simpson2018equilibrium}
J.~W. Simpson-Porco, ``Equilibrium-independent dissipativity with quadratic
  supply rates,'' \emph{IEEE Trans. Autom. Control}, vol.~64, no.~4, pp.
  1440--1455, 2018.

\bibitem{pavlov2008incremental}
A.~Pavlov and L.~Marconi, ``Incremental passivity and output regulation,''
  \emph{Systems \& Control Letters}, vol.~57, no.~5, pp. 400--409, 2008.

\bibitem{6760930}
F.~{Forni}, R.~{Sepulchre}, and A.~J. {van der Schaft}, ``On differential
  passivity of physical systems,'' in \emph{52nd IEEE Conference on Decision
  and Control}, Dec 2013, pp. 6580--6585.

\bibitem{10239448}
P.~Yang, F.~Liu, T.~Liu, and D.~J. Hill, ``Augmented synchronization of power
  systems,'' \emph{IEEE Trans. Autom. Control}, vol.~69, no.~6, pp. 3673--3688,
  2024.

\bibitem{bergen1981structure}
A.~R. Bergen and D.~J. Hill, ``A structure preserving model for power system
  stability analysis,'' \emph{IEEE transactions on power apparatus and
  systems}, no.~1, pp. 25--35, 1981.

\bibitem{spanias2018system}
C.~Spanias and I.~Lestas, ``A system reference frame approach for stability
  analysis and control of power grids,'' \emph{IEEE Trans. Power Syst.},
  vol.~34, no.~2, pp. 1105--1115, 2018.

\bibitem{park2018passivity}
S.~Park, J.~S. Shamma, and N.~C. Martins, ``Passivity and evolutionary game
  dynamics,'' in \emph{2018 IEEE Conference on Decision and Control
  (CDC)}.\hskip 1em plus 0.5em minus 0.4em\relax IEEE, 2018, pp. 3553--3560.

\bibitem{fox2013population}
M.~J. Fox and J.~S. Shamma, ``Population games, stable games, and passivity,''
  \emph{Games}, vol.~4, no.~4, pp. 561--583, 2013.

\bibitem{arcak2020dissipativity}
M.~Arcak and N.~C. Martins, ``Dissipativity tools for convergence to {Nash}
  equilibria in population games,'' \emph{IEEE Trans. Control Network Syst.},
  vol.~8, no.~1, pp. 39--50, 2020.

\bibitem{khalil2002nonlinear}
H.~K. Khalil and J.~W. Grizzle, \emph{Nonlinear systems}.\hskip 1em plus 0.5em
  minus 0.4em\relax Prentice hall Upper Saddle River, NJ, 2002, vol.~3.

\bibitem{molzahn2017survey}
D.~K. Molzahn, F.~D{\"o}rfler, H.~Sandberg, S.~H. Low, S.~Chakrabarti,
  R.~Baldick, and J.~Lavaei, ``A survey of distributed optimization and control
  algorithms for electric power systems,'' \emph{IEEE Trans. Smart Grid},
  vol.~8, no.~6, pp. 2941--2962, 2017.

\end{thebibliography}
	
	\appendices
	\makeatletter
	\@addtoreset{equation}{section}
	\@addtoreset{theorem}{section}
	\makeatother
	\section{Proof of Theorem \ref{th:c5-stability}}
	\label{append:c5}
	\renewcommand{\theequation}{A\arabic{equation}}
	\renewcommand{\thetheorem}{A\arabic{theorem}}
	
	\begin{lemma}\label{th:c5-couple}
		Consider the interconnected DAEs system \eqref{eq:c5-dae}. Assume that each dynamic subsystem \eqref{eq:c5-dbus} satisfies $delta$-D($X_i,\mathcal{D}_i$), with the corresponding storage function $S_i(x_i,u_i)$, each static subsystem \eqref{eq:c5-sbus} satisfies $delta$-D($X_i,\mathcal{D}_i$). Consider $\mathcal{D}_y$ defined as in equations \eqref{eq:c5-Dy}. If there exist $p_i>0$  $i=1,\dots,N$ such that for any $y\in\mathcal{D}_y$
		\begin{equation*}
			\begin{bmatrix}
				-\frac{\partial h_{net}}{\partial y}\\
			\end{bmatrix}^\text{T}P_\pi^\text{T}\text{blkdiag}(p_1X_1,\dots,p_{N}X_{N})P_\pi\begin{bmatrix}
				-\frac{\partial h_{net}}{\partial y}\\
			\end{bmatrix}
			\preceq0,
		\end{equation*}
		where $P_\pi$ is the permutation matrix such that $P_\pi\col(u,y)=\col(u_1,y_1\dots,u_N,y_N)$, then for the function $S(x,u):=\sum_{i=1}^{N_d}p_iS_i(x_i,u_i)$, there exist three class $\mathcal{K}$ functions $\alpha$, $\beta$, and $\gamma$ such that
		\begin{equation*}
			\alpha(\|f(x,u)\|)\leq S(x,u)\leq\beta(\|f(x,u)\|),\forall (x,u)\in\mathcal{D}_G 
		\end{equation*}
		\begin{equation*}
			\dot{S}(x,u)\leq-\gamma(\|f(x,u)\|),\;\forall (x,u)\in\mathcal{D}_G 
		\end{equation*}
	\end{lemma}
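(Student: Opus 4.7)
The plan is to deduce both conclusions of the lemma from the individual delta dissipativity inequalities combined with the algebraic coupling constraint and the matrix hypothesis on the interconnection. For the sandwich bound on $S(x,u):=\sum_{i=1}^{N_d} p_i S_i(x_i,u_i)$, I would apply condition 1) of Definition \ref{de:IODP} to each dynamic subsystem to obtain class $\mathcal{K}$ functions $\alpha_i,\beta_i$ with $\alpha_i(\|f_i\|)\le S_i\le \beta_i(\|f_i\|)$. Since $f=\col(f_1,\ldots,f_{N_d})$ satisfies $\|f_i\|\le\|f\|\le\sqrt{N_d}\,\max_i\|f_i\|$, one can assemble global class $\mathcal{K}$ bounds — for instance $\alpha(r):=\min_i p_i\alpha_i(r/\sqrt{N_d})$ and $\beta(r):=\sum_i p_i\beta_i(r)$ — giving $\alpha(\|f\|)\le S(x,u)\le\beta(\|f\|)$ on $\mathcal{D}_G$.

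For the dissipation estimate, I would differentiate $S$ along trajectories and apply condition 2) of Definition \ref{de:IODP} subsystem by subsystem. This yields $\dot S\le \sum_{i=1}^{N_d} p_i w_i(\dot u_i,\dot y_i)-\sum_{i=1}^{N_d} p_i\gamma_i(\|f_i\|)$, while Definition \ref{de:IODPs} forces $p_i w_i(\dot u_i,\dot y_i)\ge 0$ for $i=N_d+1,\dots,N$. Adding these nonnegative static terms only enlarges the right-hand side, so $\dot S\le \sum_{i=1}^{N} p_i w_i(\dot u_i,\dot y_i)-\sum_{i=1}^{N_d} p_i\gamma_i(\|f_i\|)$. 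Using the quadratic form \eqref{eq:X} of each $w_i$ together with the permutation $P_\pi$, the aggregate supply rate rewrites as $\dot\zeta^\T P_\pi^\T\,\text{blkdiag}(p_1X_1,\ldots,p_NX_N)\,P_\pi\,\dot\zeta$ with $\dot\zeta:=\col(\dot u,\dot y)$.

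The next step is to exploit the algebraic manifold. On $\mathscr{G}$, the constraint $u+h_{net}(h(x,u))=0$ together with $y=h(x,u)$ gives $u=-h_{net}(y)$; differentiating along a trajectory yields $\dot u=-(\partial h_{net}/\partial y)\,\dot y$, which is well-defined under Assumption \ref{as:c5-nonsingular}. Hence $\dot\zeta$ factorizes as a column-block matrix times $\dot y$, and substitution into the quadratic form produces precisely the left-hand side of the coupling hypothesis evaluated at $y\in\mathcal{D}_y$, which is $\preceq 0$. Consequently $\dot S\le -\sum_{i=1}^{N_d} p_i\gamma_i(\|f_i\|)$, and a class $\mathcal{K}$ function $\gamma$ lower-bounding this sum by a function of $\|f\|$ is constructed by the same minimum-type majorization as in the first step.

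The hard part is not the dissipativity algebra — which becomes essentially routine once $\dot u=-(\partial h_{net}/\partial y)\dot y$ is substituted — but the careful assembly of valid class $\mathcal{K}$ bounds from the per-subsystem $\alpha_i,\beta_i,\gamma_i$, ensuring continuity, strict monotonicity, and vanishing at the origin are preserved when combining minima, maxima, and weighted sums. A secondary subtle point is to justify that differentiating the algebraic constraint along DAE trajectories is legitimate, which again invokes Assumption \ref{as:c5-nonsingular} so that the implicit function theorem yields a well-defined $\dot y$ in $\mathcal{D}_G$.
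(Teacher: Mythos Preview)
Your proposal is correct and follows essentially the same route as the paper: sum the per-subsystem dissipation inequalities, rewrite the aggregate quadratic supply rate via $P_\pi$, substitute $\dot u=-(\partial h_{net}/\partial y)\,\dot y$ from the algebraic constraint, and eliminate the supply term using the coupling hypothesis, while assembling global class~$\mathcal{K}$ bounds from the $\alpha_i,\beta_i,\gamma_i$. The only cosmetic difference is in the construction of the lower bounds: the paper takes $\alpha(r)$ (and likewise $\gamma(r)$) as the minimum of $\sum_i p_i\alpha_i(r_i)$ over the sphere $\{(r_1,\dots,r_{N_d}):\|(r_1,\dots,r_{N_d})\|=r\}$, whereas your finite minimum $\min_i p_i\alpha_i(r/\sqrt{N_d})$ is an equally valid (and arguably simpler) choice.
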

	\begin{proof}
		1) By Definition \ref{de:IODP}, there exist class $\mathcal{K}$ functions $\alpha_i$ and  $\beta_i$, $i=1,\dots,N_d$ such that for any $(x,u)\in\mathcal{D}_G$ we have
		\begin{equation*}
			\sum_{i=1}^{N_d}p_i\alpha_i(\|f_i\|)\leq S(x,u)\leq\sum_{i=1}^{N_d}p_i\beta_i(\|f_i\|)
		\end{equation*}
		For $r\in[0,\infty)$, define the set of $N_d$-dimensional positive quadrant with radius $r$ as $$D(r):=\left\{(r_1,\dots,r_{N_d})\in[0,\infty)^{N_d}\big|\|(r_1,\dots,r_{N_d})\|=r\right\}$$
		Define the functions
		\begin{equation*}
			\alpha(r):=\min_{(r_1,\dots,r_{N_d})\in D(r)}\left(p_1\alpha_1(r_1)+\dots+p_{N_d}\alpha_n(r_{N_d})\right)
		\end{equation*}
		and
		\begin{equation*}
			\beta(r):=p_1\beta_1(r)+\dots+p_{N_d}\beta_{N_d}(r)
		\end{equation*}
		Since $p_i>0$, one can verify that $\alpha(r)$ and $\beta(r)$ are both class $\mathcal{K}$ functions on $[0,\infty)$. Noting $f=\col(f_1,\dots,f_{N_d})$, we have
		\begin{equation*}
			\alpha(\|f\|)\leq\sum_{i=1}^{N_d}p_i\alpha_i(\|f_i\|)
		\end{equation*}
		\begin{equation*}
			\sum_{i=1}^{N_d}p_i\beta_i(\|f_i\|)\leq\sum_{i=1}^{N_d}p_i\beta_i(\|f\|)=\beta(\|f\|)
		\end{equation*}
		Hence, for any $(x,u)\in\mathcal{D}_G$, we have $\alpha(\|f(x,u)\|)\leq S(x,u)\leq\beta(\|f(x,u)\|)$.
		
		2) By Definition \ref{de:IODP} and \ref{de:IODPs}, there exists class $\mathcal{K}$ functions $\gamma_i$, $i=1,\dots,N_d$ such that for any $(x,u)\in\mathcal{D}_G$ we have
		\begin{equation*}
			\dot{S}(x,u)\leq
			\sum_{i=1}^{N}p_i\begin{bmatrix}
				\dot{u}_i\\\dot{y}_i
			\end{bmatrix}^\text{T}
			X_i
			\begin{bmatrix}
				\dot{u}_i\\\dot{y}_i
			\end{bmatrix}-\sum_{i=1}^{N_d}p_i\gamma_i\left(\|f_i\|\right)
		\end{equation*}
		It follows from the definition of $P_\pi$ that
		\begin{equation*}
			\begin{aligned}
				\sum_{i=1}^{N}p_i\begin{bmatrix}
					\dot{u}_i\\\dot{y}_i
				\end{bmatrix}^\text{T}
				&X_i
				\begin{bmatrix}
					\dot{u}_i\\\dot{y}_i
				\end{bmatrix}\\
				&=\begin{bmatrix}
					\dot{u}\\ \dot{y}
				\end{bmatrix}^\text{T}P_\pi^\text{T}\text{blkdiag}(p_1X_1,\dots,p_{N}X_{N})P_\pi\begin{bmatrix}
					\dot{u}\\ \dot{y}
				\end{bmatrix}
			\end{aligned}
		\end{equation*}
		On the other hand, the network coupling \eqref{eq:c5-HN} determines the algebraic relation $-u=h_{net}(y)$ between $u$ and $y$. Hence,
		\begin{equation*}
			\begin{bmatrix}
				\dot{u}\\ \dot{y}
			\end{bmatrix}=\begin{bmatrix}
				-\frac{\partial h_{net}(y)}{\partial y}\\I
			\end{bmatrix}\dot{y}
		\end{equation*}
		Thus, by \eqref{eq:c5-global} and noting that $\mathcal{D}_y$ is the image of $\mathcal{D}_G$ under $h(x,u)$, it follows that for an arbitrary $(x,u)\in\mathcal{D}_G$ we have
		\begin{equation*}
			\sum_{i=1}^{N}p_i\begin{bmatrix}
				\dot{u}_i\\\dot{y}_i
			\end{bmatrix}^\text{T}
			X_i
			\begin{bmatrix}
				\dot{u}_i\\\dot{y}_i
			\end{bmatrix}\leq0
		\end{equation*}
		
		Similarly, for any $r\in[0,\infty)$ define the function
		\begin{equation*}
			\gamma(r):=\min_{(r_1,\dots,r_{N_d})\in D(r)}\left(p_1\gamma_1(r_1)+\dots+p_{N_d}\gamma_n(r_{N_d})\right)
		\end{equation*}
		We can verify that $\gamma(r)$ is a class $\mathcal{K}$ function on $[0,\infty)$ and
		\begin{equation*}
			\gamma(\|f\|)\leq\sum_{i=1}^{N_d}p_i\gamma_i(\|f_i\|)
		\end{equation*}
		Hence, for any $(x,u)\in\mathcal{D}_G$ we have $\dot{S}(x,u)\leq-\gamma(\|f(x,u)\|)$, which completes the proof.
	\end{proof}
	%
	
	
\end{document}